\newcommand\numberthis{\addtocounter{equation}{1}\tag{\theequation}}
\DeclareMathOperator*{\argmax}{arg\,max}
\DeclareMathOperator*{\maximize}{maximize\quad}
\DeclareMathOperator*{\MSE}{MSE}
\DeclareMathOperator*{\optimize}{optimize\quad}
\renewcommand{\vec}[1]{\boldsymbol{\mathrm{#1}}}
\newcommand{\R}{\mathbb{R}}
\newcommand{\C}{\mathbb{C}}
\newcommand{\matx}[1]{\boldsymbol{\mathrm{#1}}}
\renewcommand{\det}[1]{\mathrm{det}\left\{#1\right\}}
\newcommand{\tr}[1]{\mathrm{tr}\left\{#1\right\}}
\newcommand{\norm}[1]{\left\lVert#1\right\rVert}
\newcommand{\braket}[1]{\left\langle#1\right\rangle}
\newcommand{\FP}[1]{\mathrm{FP}\left(#1\right)}
\pgfplotsset{compat=1.13}
\DeclareMathAlphabet\mathbfcal{OMS}{cmsy}{b}{n}
\newtheorem{lemma}{Lemma}
\newtheorem{theorem}{Theorem}
\newtheorem{example}{Example}
\newtheorem{corollary}{Corollary}
\newtheorem{property}{Property}
\newtheorem*{remark}{Remark}
\newtheorem{definition}{Definition}
\algnewcommand\algorithmicto{\textbf{to}}
\begin{document}
\title{Sparse Sampling for Inverse Problems with Tensors}

\author{Guillermo~Ortiz-Jiménez, Mario~Coutino, Sundeep~Prabhakar~Chepuri, and Geert~Leus
\thanks{The authors are with the Faculty of Electrical Engineering, Mathematics and Computer Science, Delft University of Technology, The Netherlands.  Email: g.ortizjimenez@student.tudelft.nl, \{m.coutino;s.p.chepuri;g.j.t.leus\}@tudelft.nl}}

\maketitle

\begin{abstract}
We consider the problem of designing sparse sampling strategies for multidomain signals, which can be represented using tensors that admit a known multilinear decomposition.  We leverage the multidomain structure of tensor signals and propose to acquire samples using a Kronecker-structured sensing function, thereby circumventing the curse of dimensionality.  For designing such sensing functions, we develop low-complexity greedy algorithms based on submodular optimization methods to compute near-optimal sampling sets. We present several numerical examples, ranging from multi-antenna communications to graph signal processing, to validate the developed theory.
\end{abstract}

\begin{IEEEkeywords}
Graph signal processing, multidimensional sampling, sparse sampling, submodular optimization, tensors 
\end{IEEEkeywords}

\IEEEpeerreviewmaketitle

\section{Introduction}\label{sec:intro}
\IEEEPARstart{I}{n} many engineering and scientific applications, we frequently encounter large volumes of multisensor data, defined over multiple domains, which are complex in nature. For example, in wireless communications, received data per user may be indexed in space, time, and frequency. Similarly, in hyperspectral imaging, a scene measured in different wavelengths contains information from the three-dimensional spatial domain as well as the spectral domain. And also, when dealing with graph data in a recommender system, information resides on multiple domains (e.g., users, movies, music, and so on). To process such multisensor datasets, \emph{higher-order tensors} or \emph{multiway arrays} have been proven to be extremely useful. 

In practice, however, due to limited access to sensing resources, economic or physical space limitations, it is often not possible to measure such multidomain signals using every combination of sensors related to different domains. To cope with such issues, in this work, we propose \emph{sparse sampling} techniques to acquire multisensor tensor data.

Sparse samplers can be designed to select a subset of measurements (e.g., spatial or temporal samples as illustrated in Fig.~\ref{fig:1d_sparse sensing}) such that the desired inference performance is achieved. This subset selection problem is referred to as sparse sampling~\cite{foundations}.  An example of this is field estimation, in which the measured field is related to the source signal of interest through a linear model. To infer the source signal, a linear inverse problem is solved. In a resource-constrained environment, since many measurements cannot be taken, it is crucial to carefully select the best subset of samples from a large pool of measurements. This problem is combinatorial in nature and extremely hard to solve in general, even for small-sized problems. Thus, most of the research efforts on this topic focus on finding suboptimal sampling strategies that yield good approximations of the optimal solution \cite{foundations,cvx_sampling,krause2008near,logdet,frame_potential,yu1997sampling,masazade2012sparsity,nonlinear,chepuri2016sparse,chepuri2014sparsity}.

\begin{figure}
    \centering
    \begin{subfigure}{\columnwidth}
        \centering
        \includegraphics[width=0.6\columnwidth]{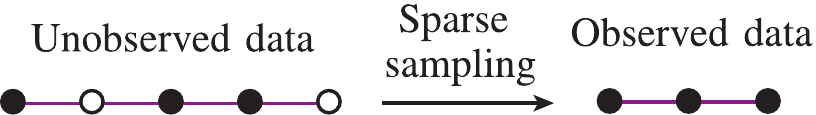}
        \caption{\footnotesize{Single domain sparse sampling}}
    \label{fig:1d_sparse sensing}
    \end{subfigure}
    \vskip 1.5em
    \begin{subfigure}{\columnwidth}
        \centering
        \includegraphics[width=\columnwidth]{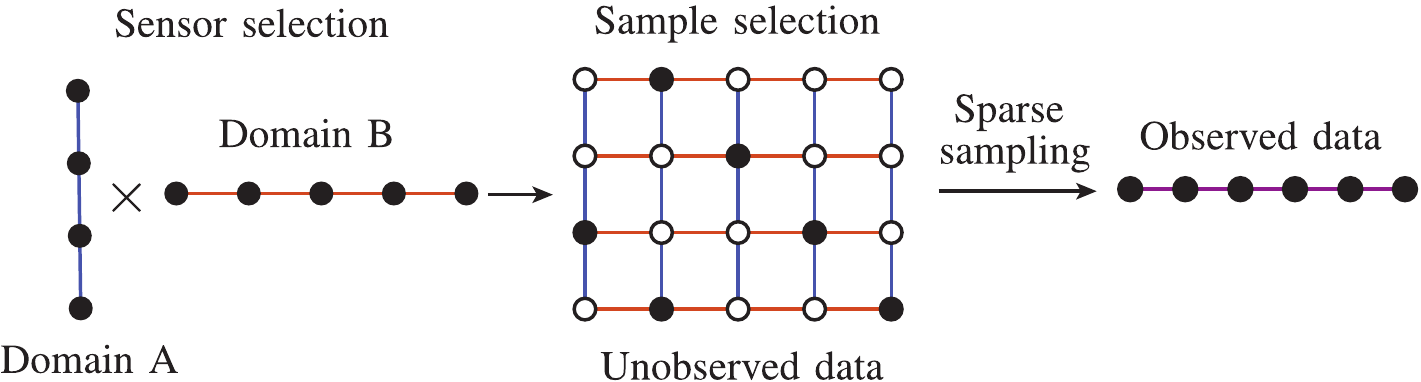}
        \caption{\footnotesize{Unstructured multidomain sparse sampling}}
        \label{fig:unst_sparse_sensing}
    \end{subfigure}
    \vskip 1.5em
     \begin{subfigure}{\columnwidth}
        \includegraphics[width=\columnwidth]{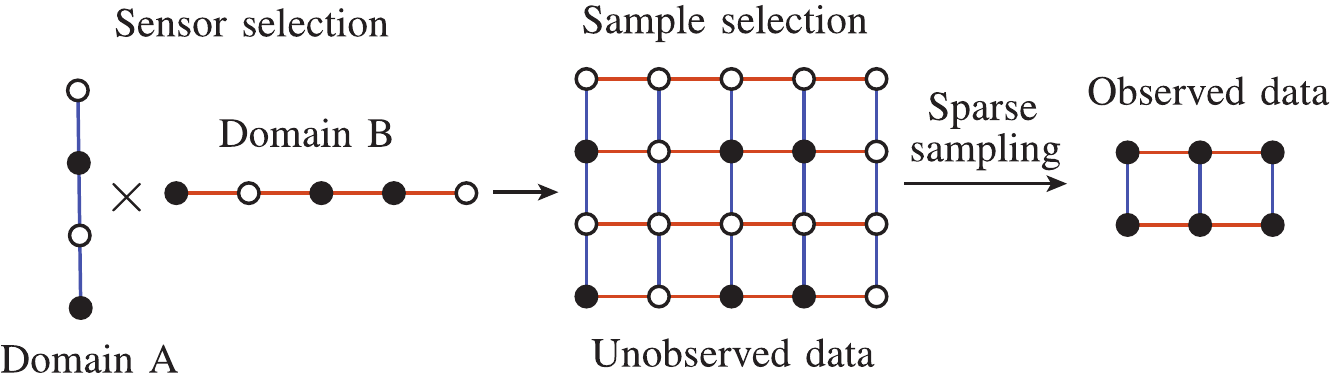}
        \caption{\footnotesize{Kronecker-structured multidomain sparse sampling}}
        \label{fig:str_sparse_sensing}
    \end{subfigure}
     \caption{\small{Different sparse sensing schemes. Black (white) dots represent selected (unselected) measurement locations. Blue and red lines determine different domain directions, and a purple line means that data has a single-domain structure.}}
    \label{fig:sparse_sensing}
    \vspace{-6mm}
\end{figure}
For signals defined over multiple domains, the dimensionality of the measurements grows much faster. An illustration of this ``curse of dimensionality" is provided in Fig.~\ref{fig:unst_sparse_sensing}, wherein the measurements now have to be systematically selected from an even larger pool of measurements. Typically used suboptimal sensor selection strategies are not useful anymore as their complexity is too high; or simply because they need to store very large matrices that do not fit in memory (see Section~\ref{sec:modelling} for a more detailed discussion). Usually, selecting samples arbitrarily from a multidomain signal, requires that sensors are placed densely in every domain, which greatly increases the infrastructure costs. Hence, we propose an efficient \emph{Kronecker-structured} sparse sampling strategy for gathering multidomain signals that overcomes these issues. In Kronecker-structured sparse sampling, instead of choosing a subset of measurements from all possible combined domain locations (as in Fig.~\ref{fig:unst_sparse_sensing}), we propose to choose a subset of sensing locations from each domain and then combine them to obtain multidimensional observations (as illustrated in Fig~\ref{fig:str_sparse_sensing}). We will see later that taking this approach will allow us to define computationally efficient design algorithms that are useful in big data scenarios.  In essence, the main question addressed in this paper is, \emph{how to choose a subset of sampling locations from each domain to sample a multidomain signal so that its reconstruction has the minimum error?}

\section{Preliminaries} \label{sec:preliminaries}

In this section, we introduce the notation that will be used throughout the rest of the paper as well as some preliminary notions of tensor algebra and multilinear systems.

\subsection{Notation}
We use  calligraphic letters such as $\mathcal{L}$ to denote sets, and $|\mathcal{L}|$ to represent its cardinality. Upper (lower) case boldface letters such as $\matx{X}$ ($\matx{x}$) are used to denote matrices (vectors). Bold calligraphic letters such as $\mathbfcal{X}$ denote tensors. $(\cdot)^T$ represents transposition, $(\cdot)^H$ conjugate transposition, and $(\cdot)^\dagger$ the Moore-Penrose pseudoinverse. The trace and determinant operations on matrices are represented by $\tr{\cdot}$ and $\det{\cdot}$, respectively. $\lambda_\text{min}\{\matx{A}\}$ denotes the minimum eigenvalue of matrix $\matx{A}$. We use $\otimes$ to represent the Kronecker product, $\odot$ to represent the Khatri-Rao or column-wise Kronecker product; and $\circ$ to represent the Hadamard or element-wise product between matrices. We write the $\ell_2$-norm of a vector as $\norm{\cdot}_2$ and the Frobenius norm of a matrix or tensor as $\norm{\cdot}_F$. We denote the inner product between two elements of a Euclidean space as $\braket{\cdot,\cdot}$. The expectation operator is denoted by $\mathbb{E}\{\cdot\}$. All logarithms are considered natural. In general, we will denote the product of variables/sets using a tilde, i.e., $\tilde{N}=\prod_{i=1}^R N_i$, or $\tilde{\mathcal{N}_i}=\mathcal{N}_1\times\dots\times\mathcal{N}_R$; and drop the tilde to denote sums (unions), i.e., $N=\sum_{i=1}^R N_i$, or $\mathcal{N}=\bigcup_{i=1}^R\mathcal{N}_i$.

Some important properties of the Kronecker and the Khatri-Rao products that will appear throughout the paper are \cite{kr}: $(\matx{A}\otimes\matx{B})(\matx{C}\otimes\matx{D})=\matx{A}\matx{C}\otimes\matx{B}\matx{D}$; $(\matx{A}\otimes\matx{B})(\matx{C}\odot\matx{D})=\matx{A}\matx{C}\odot\matx{B}\matx{D}$; $(\matx{A}\odot\matx{B})^H(\matx{A}\odot\matx{B})=\matx{A}^H\matx{A}\circ\matx{B}^H\matx{B}$; $(\matx{A}\otimes\matx{B})^\dagger=\matx{A}^\dagger\otimes\matx{B}^\dagger$; and $(\matx{A}\odot\matx{B})^\dagger=(\matx{A}^H\matx{A}\circ\matx{B}^H\matx{B})^\dagger(\matx{A}\odot\matx{B})^H$.
\subsection{Tensors}
A tensor $\mathbfcal{X}\in\C^{N_1\times \dots\times N_R}$ of order $R$ can be viewed as a discretized multidomain signal, with each of its entries indexed over $R$ different domains. 

Using multilinear algebra two tensors $\mathbfcal{X}\in\C^{N_1\times \dots \times N_R}$ and $\mathbfcal{G}\in\C^{K_1\times\dots\times K_R}$ may be related by a multilinear system of equations as 
\begin{equation}
  \mathbfcal{X}=\mathbfcal{G}\bullet_1\matx{U}_1\bullet_2\dots\bullet_R\matx{U}_R, \label{eq:tensor_linear_model}
\end{equation}
where $\{\matx{U}_i\in\C^{N_i\times K_i}\}_{i=1}^R$ represents a set of matrices that relates the $i$th domain of $\mathbfcal{X}$ and the so-called \emph{core tensor} $\mathbfcal{G}$, and $\bullet_i$ represents the $i$th mode product between a tensor and a matrix \cite{tensors}; see Fig.~\ref{fig:dense_core}.  Alternatively, vectorizing \eqref{eq:tensor_linear_model}, we have
\begin{equation}
   \vec{x}=\left(\matx{U}_1\otimes\dots\otimes\matx{U}_R\right)\vec{g},\label{eq:vec_kron}
\end{equation} 
with $\vec{x}=\mathrm{vec}(\mathbfcal{X})\in\C^{\tilde{N}}; \, \tilde{N}= \prod_{i=1}^RN_i$, and $\vec{g}=\mathrm{vec}(\mathbfcal{G}) \in\C^{\tilde{K}}; \, \tilde{K}= \prod_{i=1}^RK_i$.

\begin{figure}[t!]
    \centering
    \begin{subfigure}{\columnwidth}
        \centering
        \includegraphics[height=1in,width=0.8\columnwidth]{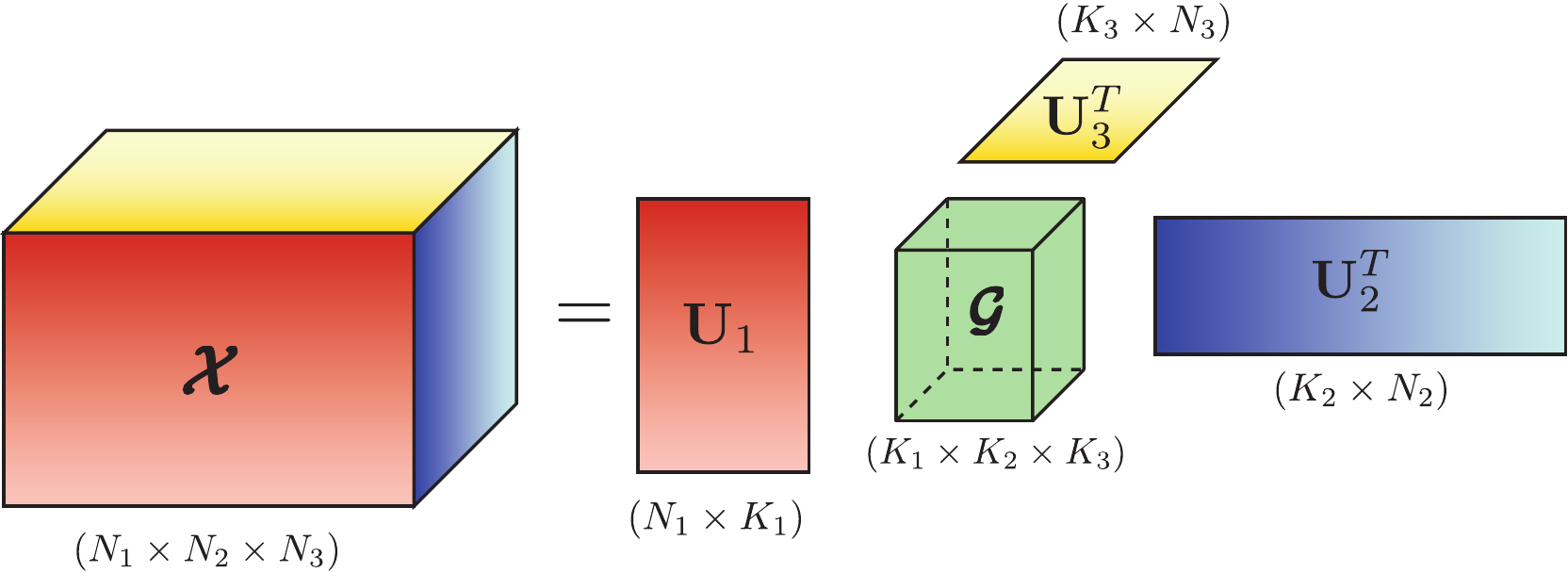}
        \caption{\footnotesize{Dense core}}
    \label{fig:dense_core}
    \end{subfigure}
    \vskip -0.5em
    \begin{subfigure}{\columnwidth}
        \centering
        \includegraphics[height=1in,width=0.8\columnwidth]{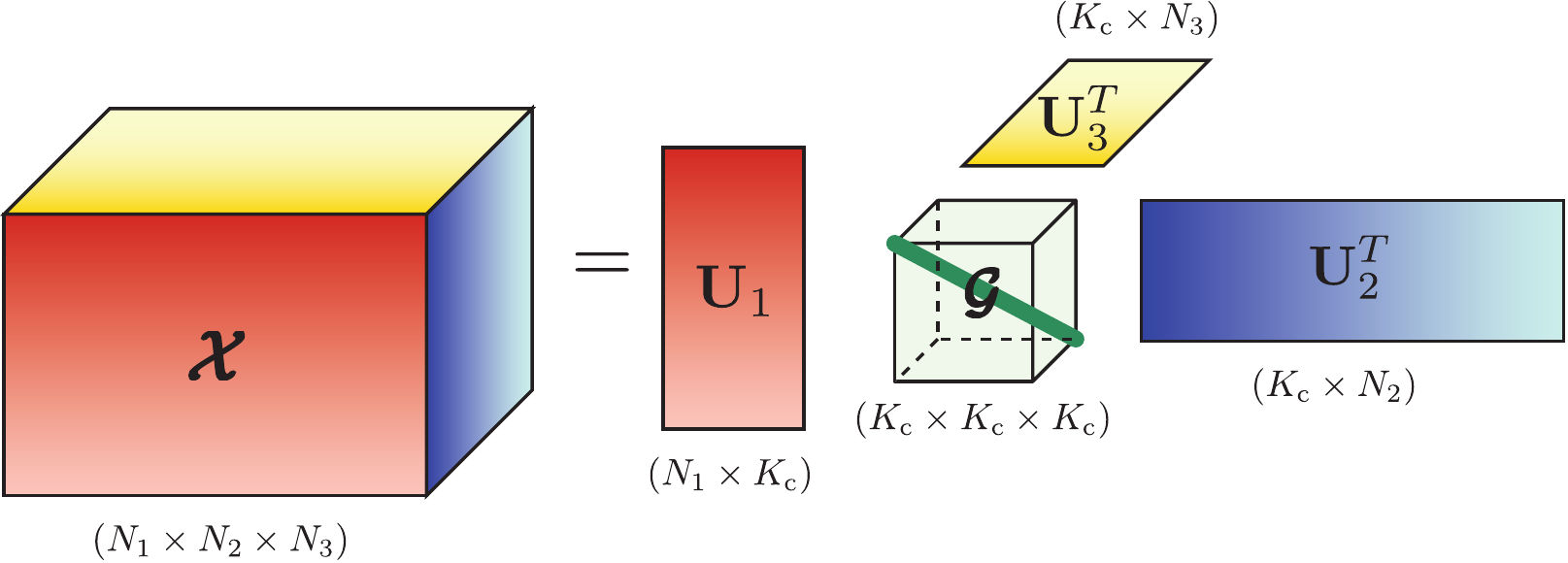}
        \caption{\footnotesize{Diagonal core}}
        \label{fig:diag_core}
    \end{subfigure}
     \caption{\small{Graphic representation of a multilinear system of equations for $R=3$. Colors represent arbitrary values.}}
    \label{fig:tensor_decomposition}
    \vspace{-6mm}
\end{figure}

When the core tensor $\mathbfcal{G}\in\C^{K_{\rm c}\times\dots\times K_{\rm c}}$ is hyperdiagonal (as depicted in Fig.~\ref{fig:diag_core}), \eqref{eq:vec_kron} simplifies to
\begin{equation}
  \vec{x}=\left(\matx{U}_1\odot\dots\odot\matx{U}_R\right){\vec{g}}\label{eq:vec_kr}
\end{equation}
with ${\vec{g}}$ collecting the main diagonal entries of $\mathbfcal{G}$. Note that $\vec{g}$ has different meanings in \eqref{eq:vec_kron} and $\eqref{eq:vec_kr}$, which can always be inferred from the context.

Such a multilinear system is commonly seen with $R=2$ and
 $\mathbfcal{X}= \mathbfcal{G}\bullet_1\matx{U}_1\bullet_2\matx{U}_2 =  \matx{U}_2\mathbfcal{G}\matx{U}_1^T,$
for instance, in image processing when relating an image to its 2-dimensional Fourier transform
with $\mathbfcal{G}$ being the spatial Fourier transform of $\mathbfcal{X}$, and $\matx{U}_1$ and $\matx{U}_2$ being inverse Fourier matrices related to the row and column spaces of the image, respectively. When dealing with Fourier matrices (more generally, Vandermonde matrices) with $\matx{U}_1=\matx{U}_2$ and a diagonal tensor core, $\mathbfcal{X}$ will be a Toeplitz covariance matrix, for which the sampling sets may be designed using sparse covariance sensing~\cite{covariance_sampling_magazine,covariance_sampling}.

\section{Problem modeling} \label{sec:modelling}

We are concerned with the design of optimal sampling strategies for an $R$th order tensor signal $\mathbfcal{X}\in\C^{N_1\times\dots\times N_R}$, which admits a multilinear parameterization in terms of a core tensor $\mathbfcal{G}\in\C^{K_1\times\dots\times K_R}$ (dense or diagonal) of smaller dimensionality. We assume that the set of system matrices $\{\matx{U}_i\}_{i=1}^R$ are perfectly known, and that each of them is tall, i.e., $N_i>K_i$ for $i=1,\dots,R$, and has full column rank.

Sparse sampling a tensor $\mathbfcal{X}$ is equivalent to selecting entries of $\vec{x}=\mathrm{vec}(\mathbfcal{X})$. Let $\mathcal{\tilde{N}}$ denote the set of indices of $\vec{x}$. Then, a particular sample selection is determined by a subset of selected indices $\mathcal{L}_\text{un}\subseteq\mathcal{\tilde{N}}$ such that $|\mathcal{L}_\text{un}|=L_\text{un}$ (subscript ``$\mathrm{un}$'' denotes unstructured). This way, we can denote the process of sampling $\mathbfcal{X}$ as a multiplication of $\vec{x}$ by a selection matrix $\matx{\Theta}(\mathcal{L}_\text{un})\in\{0,1\}^{L_\text{un}\times \tilde{N}}$ such that
\begin{equation}
  \vec{y}=\matx{\Theta}(\mathcal{L}_\text{un})\vec{x}= \matx{\Theta}(\mathcal{L}_\text{un})(\matx{U}_1\otimes\dots\otimes\matx{U}_R)\vec{g},\label{eq:sampling_unst_kron}
\end{equation}
for a dense core [cf.~\eqref{eq:vec_kron}], and
\begin{equation}
  \vec{y}=\matx{\Theta}(\mathcal{L}_\text{un})\vec{x}= \matx{\Theta}(\mathcal{L}_\text{un})(\matx{U}_1\odot\dots\odot\matx{U}_R)\vec{g},\label{eq:sampling_unst_kr}
\end{equation}
for a diagonal core [cf.~\eqref{eq:vec_kr}]. Here, $\vec{y}$ is a vector containing the $L_\text{un}$ selected entries of $\vec{x}$ indexed by the set $\mathcal{L}_\text{un}$.

\begin{figure*}[!ht]
  \centering
  \includegraphics[width=\textwidth]{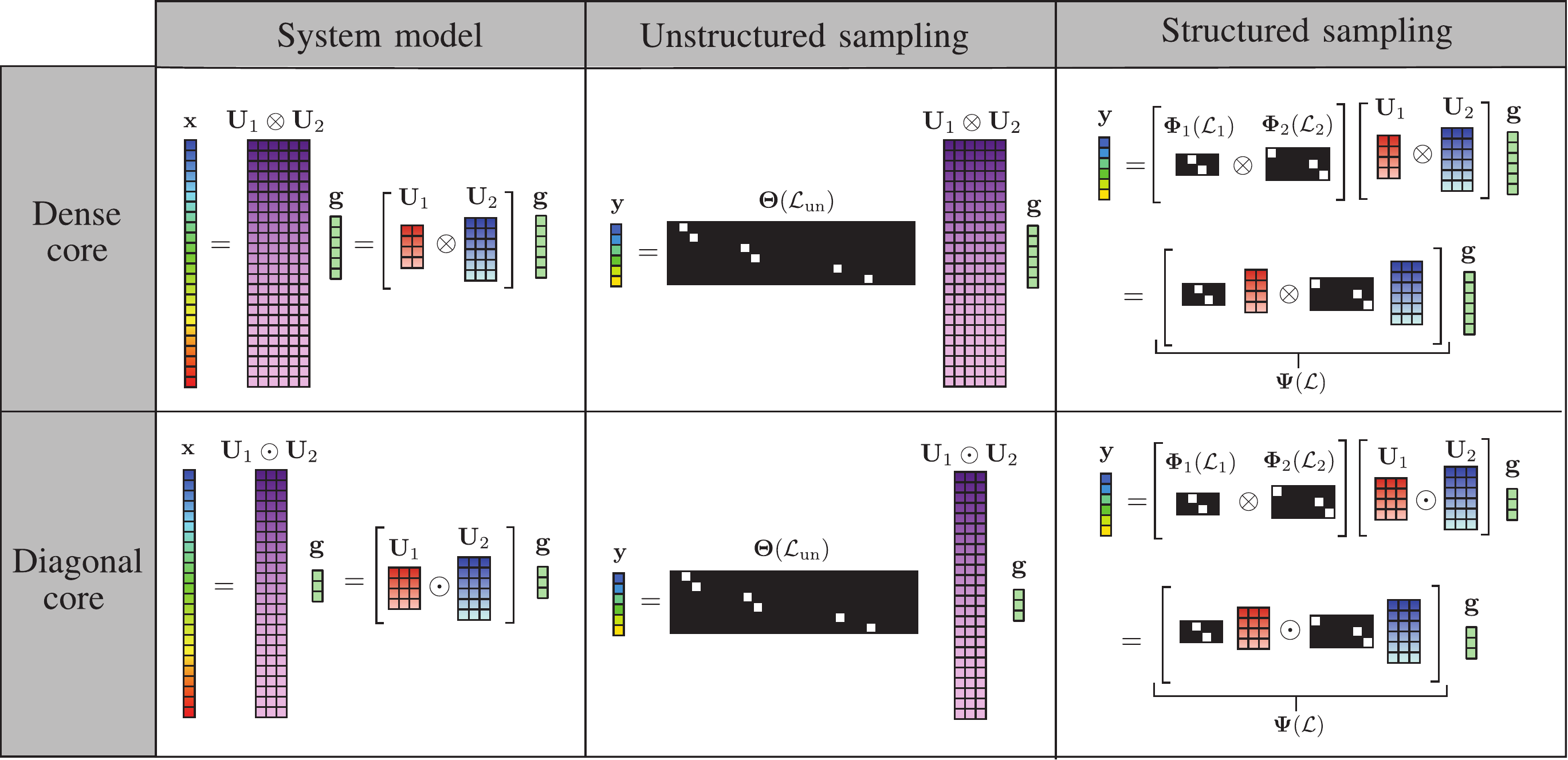}
  \caption{\small{Comparison between unstructured sampling and structured sampling ($R=2$). Black (white) cells represent zero (one) entries, and colored cells represent arbitrary numbers.}}
  \label{fig:sampling}
  \vspace{-5mm}
\end{figure*}

For each case, if $\matx{\Theta}(\mathcal{L}_\text{un})(\matx{U}_1\otimes\dots\otimes\matx{U}_R)$ and $\matx{\Theta}(\mathcal{L}_\text{un})(\matx{U}_1\odot\dots\odot\matx{U}_R)$ have full column rank, then knowing $\vec{y}$ allows to retrieve a unique least squares solution, $\hat{\vec{g}}$, as
\begin{equation}
  \hat{\vec{g}}=\left[\matx{\Theta}(\mathcal{L}_\text{un})(\matx{U}_1\otimes\dots\otimes\matx{U}_R)\right]^\dagger\vec{y},\label{eq:inv_kron_unst}
\end{equation}
or
\begin{equation}
  \hat{\vec{g}}=\left[\matx{\Theta}(\mathcal{L}_\text{un})(\matx{U}_1\odot\dots\odot\matx{U}_R)\right]^\dagger\vec{y}\label{eq:inv_kr_unst},
\end{equation}
depending on whether $\mathbfcal{G}$ is dense or hyperdiagonal. Next, we estimate $\mathbfcal{X}$ using either \eqref{eq:vec_kron} or \eqref{eq:vec_kr}.

In many applications, such as transmitter-receiver placement in multiple input multiple output (MIMO) radar, it is not possible to perform sparse sampling in an unstructured manner by ignoring the underlying domains. For these applications, some unstructured sparse sample selections generally require using a dense sensor selection in each domain (as shown in Fig.~\ref{fig:unst_sparse_sensing}), which produces a significant increase in hardware cost. Also, there is no particular structure in \eqref{eq:inv_kron_unst} and \eqref{eq:inv_kr_unst} that may be exploited to compute the pseudo-inverses, thus leading to a high computational cost to estimate $\vec{x}$. Finally, in the multidomain case, the dimensionality grows rather fast making it difficult to store the matrix $\left(\matx{U}_1\otimes\dots\otimes\matx{U}_R\right)$ or $\left(\matx{U}_1\odot \dots\odot\matx{U}_R\right)$ to perform row subset selection. For all these reasons, we will constrain ourselves to the case where the sampling matrix has a compatible Kronecker structure. In particular, we define a new sampling matrix
\begin{equation}
  \matx{\Phi}(\mathcal{L})\coloneqq\matx{\Phi}_1(\mathcal{L}_1)\otimes\dots\otimes\matx{\Phi}_R(\mathcal{L}_R),\label{eq:separable_phi}
\end{equation}
where each $\matx{\Phi}_i(\mathcal{L}_i)$ represents a selection matrix for the $i$th factor of $\mathbfcal{X}$, $\mathcal{L}_i\subseteq\mathcal{N}_i$ is the set of selected row indices from the matrix $\matx{U}_i$ for $i=1,\dots,R$, and 
$
  \mathcal{L}=\bigcup_{i=1}^R\mathcal{L}_i$ and $\mathcal{L}_i\cap\mathcal{L}_j=\varnothing\;\text{for } i\neq j.
$

We will use the notation $|\mathcal{L}_i|=L_i$ and $|\mathcal{L}|=\sum_{i=1}^R L_i=L$ to denote the number of selected sensors per domain and the total number of selected sensors, respectively; whereas $\mathcal{\tilde{L}}=\mathcal{L}_1\times\dots\times\mathcal{L}_R$ and $\tilde{L}=|\mathcal{\tilde{L}}|=\prod_{i=1}^R L_i$ denote the set of sample indices and the total number of samples acquired with the above Kronecker-structured sampler. In order to simplify the notation, whenever it will be clear, we will drop the explicit dependency of $\matx{\Phi}_i(\mathcal{L}_i)$ on the set of selected rows $\mathcal{L}_i$, from now on, and simply use $\matx{\Phi}_i$.

Imposing a Kronecker structure on the sampling scheme means that sampling can be performed independently for each domain. In the \emph{dense core tensor} case [cf.~\eqref{eq:vec_kron}], we have 
\begin{align*}
  \vec{y}&=\left(\matx{\Phi}_1\otimes\dots\otimes\matx{\Phi}_R\right)\left(\matx{U}_1\otimes\dots\otimes\matx{U}_R\right)\vec{g}\\
  &=\left(\matx{\Phi}_1\matx{U}_1\otimes\dots\otimes\matx{\Phi}_R\matx{U}_R\right)\vec{g}= \matx{\Psi}(\mathcal{L})\vec{g},\numberthis\label{eq:sampling_eq_kron}
\end{align*}
whereas in the \emph{diagonal core tensor} case [cf.~\eqref{eq:vec_kr}], we have
\begin{align*}
  \vec{y}&=\left(\matx{\Phi}_1\otimes\dots\otimes\matx{\Phi}_R\right)\left(\matx{U}_1\odot\dots\odot\matx{U}_R\right){\vec{g}}\\
  &=\left(\matx{\Phi}_1\matx{U}_1\odot\dots\odot\matx{\Phi}_R\matx{U}_R\right){\vec{g}}=\matx{\Psi}(\mathcal{L}){\vec{g}}.\numberthis\label{eq:sampling_eq_kr}
\end{align*}

As in the unstructured case, whenever \eqref{eq:sampling_eq_kron} or \eqref{eq:sampling_eq_kr} are overdetermined, using least squares, we can estimate the core $\vec{\hat{g}} = \matx{\Psi}^\dagger(\mathcal{L})\vec{y} $ as
\begin{equation}
  \vec{\hat{g}}=\left[\left(\matx{\Phi}_1\matx{U}_1\right)^\dagger\otimes\dots\otimes\left(\matx{\Phi}_R\matx{U}_R\right)^\dagger\right]\vec{y}\numberthis\label{eq:inv_kron},
\end{equation}
or
\begin{align*}
  \vec{\hat{g}}&=\left[\left(\matx{\Phi}_1\matx{U}_1\right)^H\left(\matx{\Phi}_1\matx{U}_1\right)\circ\dots\circ\left(\matx{\Phi}_R\matx{U}_R\right)^H\left(\matx{\Phi}_R\matx{U}_R\right)\right]^\dagger\\
  &\times\left[\left(\matx{\Phi}_1\matx{U}_1\right)^H\odot\dots\odot\left(\matx{\Phi}_R\matx{U}_R\right)^H\right]\vec{y}\numberthis\label{eq:inv_kr},
  \end{align*}
and then reconstruct $\vec{\hat{x}}$ using \eqref{eq:vec_kron} or \eqref{eq:vec_kr}, respectively. Comparing \eqref{eq:inv_kron} and \eqref{eq:inv_kr} to \eqref{eq:inv_kron_unst} and \eqref{eq:inv_kr_unst} we can see that leveraging the Kronecker structure of the proposed sampling scheme allows to greatly reduce the computational complexity of the least-squares problem, as the pseudoinverses in \eqref{eq:inv_kron} and \eqref{eq:inv_kr} are taken on matrices of a much smaller dimensionality than in \eqref{eq:inv_kron_unst} and \eqref{eq:inv_kr_unst}. An illustration of the comparison between unstructured sparse sensing and Kronecker-structured sparse sensing is shown in Fig.~\ref{fig:sampling} for $R=2$.

Suppose the measurements collected in $\vec{y}$ are perturbed by zero-mean white Gaussian noise with unit variance, then the least-squares solution has the inverse error covariance or the Fisher information matrix $\matx{T}(\mathcal{L})=\mathbb{E}\{({\vec{g}}-\hat{\vec{g}})(\vec{g}-\hat{\vec{g}})^H\} = \matx{\Psi}^H(\mathcal{L})\matx{\Psi}(\mathcal{L})$ that determines the quality of the estimators $\hat{\vec{g}}$. Therefore, we can use scalar functions of $\matx{T}(\mathcal{L})$ as a figure of merit to propose the sparse tensor sampling problem 
\begin{equation}
\label{eq:sparsesensing}
  \optimize_{\mathcal{L}_1, \dots,\mathcal{L}_R}  f\left\{\matx{T}(\mathcal{L})\right\} \,\, 
  \text{s. to} \,\,  \sum_{i=1}^R|\mathcal{L}_i|=L, \, \mathcal{L}=\bigcup_{i=1}^R\mathcal{L}_i,
\end{equation}
where with ``optimize'' we mean either ``maximize'' or ``minimize'' depending on the choice of the scalar function $f\{\cdot\}$. Solving \eqref{eq:sparsesensing} is not trivial due to the cardinality constraints. Therefore, in the following, we will propose tight surrogates for typically used scalar performance metrics $f\{\cdot\}$ in design of experiments with which the above discrete optimization problem can be solved efficiently and near optimally.

Note that the cardinality constraint in \eqref{eq:sparsesensing} restricts the total number of selected sensors to $L$, without imposing any constraint on the total number of gathered samples $\tilde{L}$. Although the maximum number of samples can be constrained using the constraint $\sum_{i=1}^R\log|\mathcal{L}_i|\leq \tilde{L}$, the resulting near-optimal solvers are computationally intense with a complexity of about $\mathcal{O}(N^5)$ \cite{sub_knapsack,sub_knapsack2}. Such heuristics are not suitable for the large-scale scenarios of interest.

\subsection{Prior art}

Choosing the best subset of measurements from a large set of candidate sensing locations has received a lot of attention, particularly for $R=1$, usually under the name of sensor selection/placement, which also is more generally referred to as sparse sensing~\cite{foundations}. 

Typically sparse sensing design is posed as a discrete optimization problem that finds the best sampling subset by optimizing a scalar function of the error covariance matrix. Some of the popular choices are, to minimize the mean squared error (MSE): $f\left\{\matx{T}(\mathcal{L})\right\} := \tr{\matx{T}^{-1}}$ or the frame potential: $f\left\{\matx{T}(\mathcal{L})\right\}:=\tr{\matx{T}^H\matx{T}}$, or to maximize $f\left\{\matx{T}(\mathcal{L})\right\}:= \lambda_{\text{min}}\{\matx{T}\}$ or $f\left\{\matx{T}(\mathcal{L})\right\} := \log \det{\matx{T}}$.
In this work, we will focus on the frame potential criterium as we will show later that this metric leads to very efficient sampler designs.

Depending on the strategy used to solve the optimization problem  \eqref{eq:sparsesensing} we can classify the prior art in two categories: solvers based on convex optimization, and greedy methods that leverage submodularity. In the former category, \cite{cvx_sampling} and \cite{nonlinear} present several convex relaxations of the sparse sensing problem for different optimality criteria for inverse problems with linear and non-linear models, respectively. In particular, due to the Boolean nature of the sensor selection problem (i.e., a sensor is either selected or not), its related optimization problem is not convex. However, these constraints and the constraint on the number of selected sensors can be relaxed, and once the relaxed convex problem is solved, a thresholding heuristic (deterministic or randomized) can be used to recover a Boolean solution. Despite its good performance, the complexity of convex optimization solvers is rather high (cubic with the dimensionality of the signal). Therefore, the use of convex optimization approaches to solve the sparse sensing problem in large-scale scenarios, such as the sparse tensor sampling problem, gets even more computationally intense.

For high-dimensional scenarios, greedy methods (algorithms that select one sensor at a time) are more useful. A greedy algorithm scales linearly with the number of sensors, and if one can prove submodularity of the objective function, its solution has a multiplicative near-optimality guarantee~\cite{submodular_1}. Several authors have followed this strategy and have proved submodularity of different optimality criteria such as \emph{D}-optimality \cite{logdet}, mutual information \cite{krause2008near}, and frame potential \cite{frame_potential}. All of them for the case $R=1$.

Besides parameter estimation, sparse sensing has also been studied for other common signal processing tasks, like detection \cite{chepuri2016sparse,yu1997sampling} and filtering \cite{chepuri2014sparsity,masazade2012sparsity}.  In a different context, the extension of Compressed Sensing (CS) to multidomain signals has been extensively studied \cite{kron_cs,kron_omp,caiafa2013multidimensional,mri_cs}. CS is many times seen as a complementary sampling framework to sparse sensing~\cite{foundations}, wherein CS the focus is on recovering a sparse signal rather than on designing a sparse measurement space. 

\subsection{Our contributions}

In this paper, we extend the sparse sampling framework to \emph{multilinear inverse problems}. We refer to it as ``sparse tensor sampling". We focus on two particular cases, depending on the structure of the core tensor $\mathbfcal{G}$:

\begin{itemize}
  \item \emph{Dense core}: Whenever the core tensor is non-diagonal,  sampling is performed based on \eqref{eq:sampling_eq_kron}. We will see that to ensure identifiability of the system, we need to select more entries in each domain than the rank of its corresponding system matrix, i.e., as a necessary condition we require $L\geq\sum_{i=1}^R K_i = K$ sensors, where $\{K_i\}_{i=1}^R$ are the dimensions of the core tensor $\mathbfcal{G}$. 
  \item \emph{Diagonal core}: Whenever the core tensor is diagonal, sampling is performed based on~\eqref{eq:sampling_eq_kr}. The use of the Khatri-Rao product allows for higher compression. In particular, under mild conditions on the entries of the factor matrices, we can guarantee identifiability of the sampled equations using $L\geq K_\text{c}+R-1$ sensors, where $K_c$ is the length of the edges of the hypercubic core $\mathbfcal{G}$. 
\end{itemize}
For both the cases, we propose efficient greedy algorithms to compute a near-optimal sampling set.

\subsection{Paper outline}

The remainder of the paper is organized as follows. In Sections~\ref{sec:kron_sampling} and \ref{sec:diag_sampling}, we develop solvers for the sparse tensor sampling problem with dense and diagonal core tensors, respectively. In Section~\ref{sec:numerical_results}, we provide a few examples to illustrate the developed framework. Finally, we conclude this paper by  summarizing the results in Section~\ref{sec:conclusions}.

\section{Dense core sampling}\label{sec:kron_sampling}

In this section, we focus on the most general situation when $\mathbfcal{G}$ is an unstructured dense tensor. Our objective is to design the sampling sets $\{\mathcal{L}_i\}_{i=1}^R$ by solving the discrete optimization problem~\eqref{eq:sparsesensing}.  

We formulate the sparse tensor sampling problem using the frame potential as a performance measure. Following the same rationale as in \cite{frame_potential}, but for multidomain signals, we will argue that the frame potential is a tight surrogate of the MSE. By doing so, we will see that when we impose a Kronecker structure on the sampling scheme, as in \eqref{eq:separable_phi}, the frame potential of $\matx{\Psi}$ can be factorized in terms of the frame potential of the different domain factors. This allows us to propose a low complexity algorithm for sampling tensor data. 

Throughout this section, we will use tools from submodular optimization theory. Hence, we will start by introducing the main concepts related to submodularity in the next subsection. 
\subsection{Submodular optimization}
Submodularity is a notion based on the law of diminishing returns \cite{submodular_book} that is useful to obtain heuristic algorithms with near-optimality guarantees for cardinality-constrained discrete optimization problems. More precisely, submodularity is formally defined as follows.

\begin{definition}[Submodular function \cite{submodular_book}]\label{def:submodular}
  A set function $f:2^\mathcal{N}\rightarrow\R$ defined over the subsets of $\mathcal{N}$ is submodular if, for every $\mathcal{X}\subseteq\mathcal{N}$ and $x,y\in\mathcal{N}\setminus\mathcal{X}$, we have
  \begin{equation*}
    f(\mathcal{X}\cup\{x\})-f(\mathcal{X})\geq f(\mathcal{X}\cup\{x,y\})-f(\mathcal{X}\cup\{y\}).
  \end{equation*}
  A function $f$ is said to be supermodular if $-f$ is submodular.
\end{definition}

Besides submodularity, many near-optimality theorems in discrete optimization require functions to be also monotone non-decreasing, and normalized.

\begin{definition}[Monotonicity]
  A set function $f:2^\mathcal{N}\rightarrow\R$ is monotone non-decreasing if, for every $\mathcal{X}\subseteq\mathcal{N}$,
  \begin{equation*}
    f(\mathcal{X}\cup\{x\})\geq f(\mathcal{X})\quad \forall x\in\mathcal{N}\setminus\mathcal{X}
  \end{equation*}
\end{definition}

\begin{definition}[Normalization]
  A set function $f:2^\mathcal{N}\rightarrow\R$ is normalized if $f(\varnothing)=0$.
\end{definition}

In submodular optimization, matroids are generally used to impose constraints on an optimization, such as the ones in~\eqref{eq:sparsesensing}. A matroid generalizes the concept of linear independence in algebra to sets. Formally, a matroid is defined as follows.

\begin{definition}[Matroid \cite{discrete_optimization}]\label{def:matroid}
  A finite matroid $\mathcal{M}$ is a pair $(\mathcal{N},\mathcal{I})$, where $\mathcal{N}$ is a finite set and $\mathcal{I}$ is a family of subsets of $\mathcal{N}$ that satisfies: 1) The empty set is independent, i.e., $\varnothing\in\mathcal{I}$; 2) For every $\mathcal{X}\subseteq\mathcal{Y}\subseteq\mathcal{N}$, if $\mathcal{Y}\in\mathcal{I}$, then $\mathcal{X}\in\mathcal{I}$; and 3) For every $\mathcal{X},\mathcal{Y}\subseteq\mathcal{N}$ with $|\mathcal{Y}|>|\mathcal{X}|$ and $\mathcal{X},\mathcal{Y}\in\mathcal{I}$ there exists one $x\in\mathcal{Y}\setminus\mathcal{X}$ such that $\mathcal{X}\cup\{x\}\in\mathcal{I}$.
\end{definition}

In this paper we will deal with the following types of matroids.
\begin{example}[Uniform matroid \cite{discrete_optimization}]
  The subsets of $\mathcal{N}$ with at most $K$ elements form a uniform matroid $\mathcal{M}_\mathrm{u}=(\mathcal{N},\mathcal{I}_\mathrm{u})$ with $\mathcal{I}_\mathrm{u}=\{\mathcal{X}\subseteq\mathcal{N}:|\mathcal{X}|\leq K\}$.
\end{example}

\begin{example}[Partition matroid \cite{discrete_optimization}]\label{ex:partition_matroid}
  If $\{\mathcal{N}_i\}_{i=1}^R$ form a partition of $\mathcal{N}=\bigcup_{i=1}^R\mathcal{N}_i$ then $\mathcal{M}_\mathrm{p}=(\mathcal{N},\mathcal{I}_\mathrm{p})$ with $\mathcal{I}_\mathrm{p}=\{\mathcal{X}\subseteq\mathcal{N}:|\mathcal{X}\cap\mathcal{N}_i|\leq K_i\quad i=1,\dots,R\}$ defines a partition matroid.
\end{example}

\begin{example}[Truncated partition matroid \cite{discrete_optimization}]
  The intersection of a uniform matroid $\mathcal{M}_\mathrm{u}=(\mathcal{N},\mathcal{I}_\mathrm{u})$ and a partition matroid $\mathcal{M}_\mathrm{p}=(\mathcal{N},\mathcal{I}_\mathrm{p})$ defines a truncated partition matroid $\mathcal{M}_\mathrm{t}=(\mathcal{N},\mathcal{I}_\mathrm{p}\cap\mathcal{I}_\mathrm{u})$.
\end{example}

The matroid-constrained submodular optimization problem
\begin{equation}
  \maximize_{\mathcal{X}\subseteq\mathcal{N}} f(\mathcal{X})\quad \text{subject to}\quad \mathcal{X}\in\bigcap_{i=1}^T\mathcal{I}_i\label{eq:submodular_matroid}
\end{equation}
can be solved near optimally using Algorithm~\ref{alg:greedy_matroid}. This result is formally stated in the following theorem.

\begin{algorithm}[t]
\begin{algorithmic}[1]
\Require{$\mathcal{X}=\varnothing$, $K$, $\{\mathcal{I}_i\}_{i=1}^T$}
\For{$k\gets 1$ \textbf{to} $K$}
\State $s^\star=\argmax_{s\notin\mathcal{X}}\{f(\mathcal{X}\cup\{s\}):\mathcal{X}\cup\{s\}\in\bigcap_{i=1}^T\mathcal{I}_i\}$ 
\State $\mathcal{X}\leftarrow\mathcal{X}\cup\{s^\star\}$
\EndFor
\State\Return $\mathcal{X}$
\end{algorithmic}
\caption{Greedy maximization subject to $T$ matroid constraints}
\label{alg:greedy_matroid}
\end{algorithm}

\begin{theorem}[Matroid-constrained submodular maximization\cite{submodular_2}]\label{thm:submodular_matroid}
  Let $f:2^\mathcal{N}\rightarrow\R$ be a monotone non-decreasing, normalized, submodular set function, and $\{\mathcal{M}_i=(\mathcal{N},\mathcal{I}_i)\}_{i=1}^T$ be a set of matroids defined over $\mathcal{N}$. Furthermore, let $\mathcal{X}^\star$ denote the optimal solution of \eqref{eq:submodular_matroid}, and let $\mathcal{X}_\text{greedy}$ be the solution obtained by Algorithm~\ref{alg:greedy_matroid}. Then
  \begin{equation*}
    f(\mathcal{X}_\text{greedy})\geq \cfrac{1}{T+1}\;f(\mathcal{X}^\star).
  \end{equation*}
\end{theorem}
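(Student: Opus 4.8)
The plan is to run the standard exchange-and-charging analysis of the greedy algorithm for monotone submodular maximization over an intersection of matroids. Index the elements of $\mathcal{X}_\text{greedy}=\{s_1,\dots,s_m\}$ in the order Algorithm~\ref{alg:greedy_matroid} inserts them, write $\mathcal{X}_j=\{s_1,\dots,s_j\}$, and assume (as the statement implicitly does) that the algorithm is run until $\mathcal{X}_\text{greedy}$ is \emph{maximal} in $\bigcap_{i=1}^T\mathcal{I}_i$; also discard any loops of the $\mathcal{M}_i$, since such elements belong to no member of $\mathcal{I}_i$ and hence to neither $\mathcal{X}_\text{greedy}$ nor $\mathcal{X}^\star$. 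Call $o\notin\mathcal{X}_{j-1}$ \emph{available at step $j$} if $\mathcal{X}_{j-1}\cup\{o\}\in\bigcap_{i=1}^T\mathcal{I}_i$; by hereditarity (property 2 of Definition~\ref{def:matroid}) this holds on an initial segment $\{1,\dots,r(o)\}$ of the steps, and $r(o)\ge 1$ for every $o\in\mathcal{X}^\star\setminus\mathcal{X}_\text{greedy}$ after the no-loop reduction.

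The combinatorial core is the claim that \emph{$\mathcal{X}^\star$ admits a partition $\mathcal{X}^\star=\mathcal{D}_1\cup\dots\cup\mathcal{D}_m$ with $|\mathcal{D}_j|\le T$, in which each $s_j\in\mathcal{X}^\star\cap\mathcal{X}_\text{greedy}$ is put in $\mathcal{D}_j$ and every other $o\in\mathcal{D}_j$ is available at step $j$.} I would obtain this from a deficiency form of Hall's theorem applied to the bipartite graph joining each $o\in\mathcal{X}^\star\setminus\mathcal{X}_\text{greedy}$ to the steps $1,\dots,r(o)$, where step $j$ is given capacity $T$ minus $1$ if $s_j\in\mathcal{X}^\star$ and $T$ otherwise. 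Because the neighbourhoods are initial segments, Hall's condition reduces to checking, for every $R\le m$, that $|\{o\in\mathcal{X}^\star\setminus\mathcal{X}_\text{greedy}:r(o)\le R\}|+|\mathcal{X}^\star\cap\mathcal{X}_R|\le TR$. This holds because any $o$ counted on the left has $\mathcal{X}_R\cup\{o\}\notin\mathcal{I}_i$ for some $i$ and therefore lies in the flat $\mathrm{cl}_i(\mathcal{X}_R)$, so the left-hand side counts a subset of $\mathcal{X}^\star\cap\bigcup_{i=1}^T\mathrm{cl}_i(\mathcal{X}_R)$; within each $\mathrm{cl}_i(\mathcal{X}_R)$ the set $\mathcal{X}^\star$, being independent in $\mathcal{M}_i$, meets at most $\mathrm{rank}_i(\mathcal{X}_R)=|\mathcal{X}_R|=R$ points, giving the bound $TR$. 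The partition is then completed by placing each $s_j\in\mathcal{X}^\star$ in $\mathcal{D}_j$, consuming the reserved unit of capacity.

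Given the partition, the estimate follows by telescoping. Using monotonicity, $f(\mathcal{X}^\star)\le f(\mathcal{X}^\star\cup\mathcal{X}_\text{greedy})$, and I would expand $f(\mathcal{X}^\star\cup\mathcal{X}_\text{greedy})-f(\mathcal{X}_\text{greedy})$ by adding the parts $\mathcal{D}_1,\dots,\mathcal{D}_m$ and then the elements within each part one at a time. Each elementary marginal gain of adding some $o\in\mathcal{D}_j$ is, by the marginal-value form of submodularity (equivalent to Definition~\ref{def:submodular}) and the fact that $\mathcal{X}_{j-1}$ is contained in the current set, at most $f(\mathcal{X}_{j-1}\cup\{o\})-f(\mathcal{X}_{j-1})$, which in turn is at most $f(\mathcal{X}_j)-f(\mathcal{X}_{j-1})$ since $o$ was available at step $j$ and the greedy rule selects the available element of largest marginal gain. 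Summing over the at most $T$ elements of each $\mathcal{D}_j$ and then over $j$, and using normalization $f(\varnothing)=0$, gives $f(\mathcal{X}^\star\cup\mathcal{X}_\text{greedy})-f(\mathcal{X}_\text{greedy})\le T\sum_{j=1}^m\bigl(f(\mathcal{X}_j)-f(\mathcal{X}_{j-1})\bigr)=T\,f(\mathcal{X}_\text{greedy})$, hence $f(\mathcal{X}^\star)\le (T+1)\,f(\mathcal{X}_\text{greedy})$. The main obstacle is the combinatorial claim — in particular verifying Hall's condition simultaneously against all $T$ matroids via the closure–rank argument; once the ``available at step $j$'' bookkeeping is set up, the submodular accounting is routine.
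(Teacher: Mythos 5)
Your argument is correct. Note first that the paper does not prove this theorem at all --- it is quoted from \cite{submodular_2} (the classical Fisher--Nemhauser--Wolsey analysis), so there is no in-paper proof to match; your write-up is a self-contained reconstruction of that classical result. Compared with the standard argument, you make one genuine substitution: Fisher--Nemhauser--Wolsey bound the \emph{cumulative} count of optimal elements spanned by $\mathcal{X}_R$ by $TR$ (exactly your closure--rank computation) and then combine this with the monotone decrease of the greedy gains $\rho_j=f(\mathcal{X}_j)-f(\mathcal{X}_{j-1})$ via an Abel-summation step, whereas you upgrade the cumulative bound to an exact partition with $|\mathcal{D}_j|\le T$ through the deficiency form of Hall's theorem (the initial-segment structure of the neighbourhoods makes the verification reduce to precisely the sets $S_R$, as you say). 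This buys you a cleaner telescoping sum that never needs $\rho_1\ge\rho_2\ge\cdots$, at the cost of invoking a matching theorem. All the supporting details check out: $S_R$ and $\mathcal{X}^\star\cap\mathcal{X}_R$ are disjoint subsets of $\mathcal{X}^\star\cap\bigcup_i\mathrm{cl}_i(\mathcal{X}_R)$, each piece has at most $\mathrm{rank}_i(\mathcal{X}_R)=R$ points by independence of $\mathcal{X}^\star$, and the final chain $f(\mathcal{X}^\star)\le f(\mathcal{X}^\star\cup\mathcal{X}_\text{greedy})\le f(\mathcal{X}_\text{greedy})+T\sum_j\rho_j=(T+1)f(\mathcal{X}_\text{greedy})$ is sound. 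Your caveat that greedy must be run to a maximal independent set (so that $r(o)\le m$ for every unselected optimal element) is the right one to flag, since Algorithm~\ref{alg:greedy_matroid} as written runs a fixed number $K$ of iterations; the guarantee as stated presumes greedy terminates only when no feasible element remains.
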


\subsection{Greedy method}\label{sec:dense_fp}

The frame potential \cite{mse_fp} of the matrix $\matx{\Psi}$ is defined as the trace of the Grammian matrix
$
  \FP{\matx{\Psi}}\coloneqq\tr{\matx{T}^H\matx{T}}
$  
with $\matx{T}=\matx{\Psi}^H\matx{\Psi}$. The frame potential can be related to the MSE, 
$
  \MSE(\matx{\Psi}(\mathcal{L}))=\tr{\matx{T}^{-1}(\mathcal{L})},
 $ 
using~\cite{frame_potential} 
\begin{equation}
  c_1\cfrac{\FP{\matx{\Psi}(\mathcal{L})}}{\lambda_\mathrm{max}^2\{\matx{T}(\mathcal{L})\}}\leq\MSE(\matx{\Psi}(\mathcal{L}))\leq c_2\cfrac{\FP{\matx{\Psi}(\mathcal{L})}}{\lambda_\mathrm{min}^2\{\matx{T}(\mathcal{L})\}},\label{eq:MSE_bound}
\end{equation}
where $c_1$, and $c_2$ are constants that depend the data model.

From the above bound, it is clear that by minimizing the frame potential of $\matx{\Psi}$ one can minimize the MSE, which is otherwise difficult to minimize as it is neither convex, nor submodular.

The frame potential of $\matx{\Psi}(\mathcal{L})\coloneqq\matx{\Psi}_1(\mathcal{L}_1)\otimes\dots\otimes\matx{\Psi}_R(\mathcal{L}_R)$  can be expressed as the frame potential of its factors 
$\matx{\Psi}_i(\mathcal{L}_i)\coloneqq\matx{\Phi}_i(\mathcal{L}_i)\matx{U}_i$. To show this, recall the definition of the frame potential as
\begin{align*}
  \FP{\matx{\Psi}(\mathcal{L})}&=\tr{\matx{T}^H(\mathcal{L})\matx{T}(\mathcal{L})}\\
  &=\tr{\matx{T}_1^H\matx{T}_1\otimes\dots\otimes\matx{T}_R^H\matx{T}_R},\numberthis\label{eq:kron_fp_incomplete}
\end{align*}
where $\matx{T}_i=\matx{\Psi}_i^H\matx{\Psi}_i$.
Now, using the fact that for any two matrices $\matx{A}\in\C^{K_A\times K_A}$ and $\matx{B}\in\C^{K_B\times K_B}$ we have $\tr{\matx{A}\otimes\matx{B}}=\tr{\matx{A}}\tr{\matx{B}}$, we can expand \eqref{eq:kron_fp_incomplete} as
\begin{equation*}
  \FP{\matx{\Psi}(\mathcal{L})}=\prod_{i=1}^R\tr{\matx{T}_i^H\matx{T}_i}=\prod_{i=1}^R \FP{\matx{\Psi}_i(\mathcal{L}_i)}.
\end{equation*}
For brevity, we will write the above expression alternatively as an explicit function of the selection sets $\mathcal{L}_i$:
\begin{align}
  F(\mathcal{L})&\coloneqq\FP{\matx{\Psi}(\mathcal{L})}=\prod_{i=1}^RF_i(\mathcal{L}_i) \coloneqq \prod_{i=1}^R \FP{\matx{\Psi}_i(\mathcal{L}_i)}\label{eq:kron_fp}
\end{align}
Expression \eqref{eq:kron_fp} shows again the advantage of working with a Kronecker-structured sampler: instead of computing every cross-product between the columns of $\matx{\Psi}$ to compute the frame potential, we can arrive to the same value using the frame potential of $\{\matx{\Psi}_i\}_{i=1}^R$.

\subsubsection{Submodularity of $F(\mathcal{L})$}

Function $F(\mathcal{L})$ as defined in \eqref{eq:kron_fp} does not directly meet the conditions [cf. Theorem~\ref{thm:submodular_matroid}] required for near optimality of the greedy heuristic, but it can be modified slightly to satisfy them. In this sense, we define the function $G:2^\mathcal{N}\rightarrow\R$ on the subsets of $\mathcal{N}$ as
\begin{equation}
\label{eq:set_func}
  G(\mathcal{S})\coloneqq F(\mathcal{N})- F(\mathcal{N}\setminus\mathcal{S})
\end{equation}
where recall that $F(\mathcal{N}) = \prod_{i=1}^R F_i(\mathcal{N}_i)$, $F(\mathcal{N}\setminus\mathcal{S}) = \prod_{i=1}^RF_i(\mathcal{N}_i\setminus\mathcal{S}_i)$, and 
$
  \mathcal{S}=\bigcup_{i=1}^R\mathcal{S}_i, \quad \mathcal{S}_i\cap\mathcal{S}_j=\varnothing\text{ for }i\neq j$. Therefore, $\{\mathcal{S}_i\}_{i=1}^R$ form a partition of $\mathcal{S}$. 

It is clear that if we make the change of variables from $\mathcal{L}$ to $\mathcal{S}$ maximizing $G$ over $\mathcal{S}$ is the same as minimizing the frame potential over $\mathcal{L}$. However, working with the complement set results in a set function that is submodular and monotone non-decreasing, as shown in the next theorem. Consequently, $G$ satisfies the conditions of the near-optimality theorems.
\begin{theorem}\label{thm:submodular_G}
  The set function $G(\mathcal{S})$ defined in \eqref{eq:set_func} is a normalized, monotone non-decreasing, submodular function for all subsets of $\mathcal{N}=\bigcup_{i=1}^R\mathcal{N}_i$. 
\end{theorem}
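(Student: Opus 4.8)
The plan is to reduce every claim about $G$ to a corresponding claim about the product frame potential $F(\mathcal{L})=\prod_{i=1}^R F_i(\mathcal{L}_i)$ from \eqref{eq:kron_fp}, and then to settle those claims using a closed form for each factor $F_i$. Normalization of $G$ is immediate from \eqref{eq:set_func}, since $G(\varnothing)=F(\mathcal{N})-F(\mathcal{N})=0$. For the remaining two properties I would use that $\mathcal{S}\mapsto\mathcal{A}\coloneqq\mathcal{N}\setminus\mathcal{S}$ is an inclusion-reversing bijection on $2^{\mathcal{N}}$ with $G(\mathcal{S})=F(\mathcal{N})-F(\mathcal{A})$. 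Writing out the monotonicity inequality for $G$ at $\mathcal{S}$, $x$ and setting $\mathcal{B}=\mathcal{N}\setminus(\mathcal{S}\cup\{x\})$ turns it into $F(\mathcal{B}\cup\{x\})\ge F(\mathcal{B})$; writing out the submodularity inequality for $G$ at $\mathcal{X}$, $x$, $y$ and setting $\mathcal{A}=\mathcal{N}\setminus(\mathcal{X}\cup\{x,y\})$ turns it into $F(\mathcal{A}\cup\{x,y\})-F(\mathcal{A}\cup\{y\})\ge F(\mathcal{A}\cup\{x\})-F(\mathcal{A})$. Hence it suffices to show that $F$ is monotone non-decreasing and \emph{supermodular} on $2^{\mathcal{N}}$.

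Next I would record the structure of the factors. Let $\matx{u}_{i,k}\in\C^{K_i}$ denote the $k$th row of $\matx{U}_i$ written as a column. Since $\matx{\Psi}_i=\matx{\Phi}_i\matx{U}_i$ merely selects the rows of $\matx{U}_i$ indexed by $\mathcal{L}_i$, we have $\matx{\Psi}_i^H\matx{\Psi}_i=\sum_{k\in\mathcal{L}_i}\matx{u}_{i,k}\matx{u}_{i,k}^H$, so that $F_i(\mathcal{L}_i)=\tr{(\matx{\Psi}_i^H\matx{\Psi}_i)^2}=\sum_{k\in\mathcal{L}_i}\sum_{l\in\mathcal{L}_i}\big|\braket{\matx{u}_{i,k},\matx{u}_{i,l}}\big|^2$. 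With the nonnegative symmetric weights $w^{(i)}_{kl}=|\braket{\matx{u}_{i,k},\matx{u}_{i,l}}|^2$ this is a sum of pairwise weights over $\mathcal{L}_i$, from which three facts follow at once: $F_i\ge 0$; the marginal gain of adding an index $x$ to $\mathcal{A}_i$ equals $w^{(i)}_{xx}+2\sum_{k\in\mathcal{A}_i}w^{(i)}_{xk}\ge 0$, so $F_i$ is monotone non-decreasing; and the discrete second difference of $F_i$ in two distinct indices $x,y$ equals $2w^{(i)}_{xy}\ge 0$, so $F_i$ is supermodular.

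It then remains to assemble $F=\prod_i F_i$. Monotonicity is clear because enlarging one block $\mathcal{L}_j$ only enlarges $F_j$ while leaving the other (nonnegative) factors fixed, so the product cannot decrease. For supermodularity, fix $\mathcal{A}\subseteq\mathcal{N}$ and distinct $x,y\notin\mathcal{A}$ and split into two cases. If $x,y$ lie in the same block $\mathcal{N}_i$, the second difference of $F$ equals $\big(\prod_{l\ne i}F_l(\mathcal{A}_l)\big)$ times the second difference of $F_i$, which is $\ge 0$ since the prefactor is nonnegative and $F_i$ is supermodular. If $x\in\mathcal{N}_i$ and $y\in\mathcal{N}_j$ with $i\ne j$, put $a=F_i(\mathcal{A}_i)$, $a'=F_i(\mathcal{A}_i\cup\{x\})$, $b=F_j(\mathcal{A}_j)$, $b'=F_j(\mathcal{A}_j\cup\{y\})$ and $P=\prod_{l\ne i,j}F_l(\mathcal{A}_l)$; a direct expansion shows the second difference of $F$ equals $P(a'-a)(b'-b)$, which is $\ge 0$ because $P\ge 0$ and $a'\ge a$, $b'\ge b$ by monotonicity of $F_i$ and $F_j$. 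Combined with the complement reduction above, this yields all three properties of $G$.

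The main obstacle is the cross-block case of supermodularity: a product of supermodular set functions need not be supermodular, so the argument genuinely relies on each $F_i$ being simultaneously nonnegative and monotone, which is what lets one recognize the mixed second difference $P(a'-a)(b'-b)$ as a product of three nonnegative quantities. Everything else — the complement reduction and the single-block estimates — is routine once $F_i$ has been put in the form $\sum_{k,l}|\braket{\matx{u}_{i,k},\matx{u}_{i,l}}|^2$.
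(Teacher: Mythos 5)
Your proof is correct and takes essentially the same route as the paper's: the complement reduction to monotonicity and supermodularity of $F=\prod_i F_i$ is exactly the paper's device of working with $\bar{F}_i(\mathcal{S}_i)=F_i(\mathcal{N}_i\setminus\mathcal{S}_i)$, and your two-case analysis (same block via supermodularity of $F_i$, cross block via the product of two nonnegative marginal differences $P(a'-a)(b'-b)\ge 0$) matches the paper's argument term for term. The only difference is cosmetic: you derive nonnegativity, monotonicity, and supermodularity of each $F_i$ from the explicit pairwise form $\sum_{k,l}|\braket{\matx{u}_{i,k},\matx{u}_{i,l}}|^2$, whereas the paper cites these facts from the frame-potential literature.
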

\begin{proof} 
See Appendix~\ref{ap:proof_G}.
\end{proof}

With this result we can now claim near-optimality of the greedy algorithm that solves the cardinality constrained maximization of $G(\mathcal{S})$. However, as we said, minimizing the frame potential only makes sense as long as \eqref{eq:MSE_bound} is tight. In particular, whenever $\matx{T}(\mathcal{L})$ is singular we know that the MSE is infinity, and hence \eqref{eq:MSE_bound} is meaningless. For this reason, next to the cardinality constraint in \eqref{eq:sparsesensing} that limits the total number of sensors, we need to ensure that $\matx{\Psi}(\mathcal{L})$ has full column rank, i.e., $L_i\geq K_i$ for $i=1,\dots,R$. In terms of $\mathcal{S}$, this is equivalent to
\begin{equation}
  |\mathcal{S}_i|=|\mathcal{N}_i\setminus\mathcal{L}_i|\leq N_i-K_i\quad i=1,\dots,R,\label{eq:identifiability_constraint}
\end{equation}
where this set of constraints forms a partition matroid $\mathcal{M}_p=(\mathcal{N},\mathcal{I}_p)$ [cf. Example~\ref{ex:partition_matroid} from Definition~\ref{def:matroid}]. Hence, we can introduce the following submodular optimization problem as surrogate for the minimization of the frame potential
   \begin{equation}
   \label{eq:submodular_max_kron_matroid}
    \maximize_{\mathcal{S}\subseteq\mathcal{N}} \, G(\mathcal{S}) \,\,\, \text{s. to} \,\,\,  \mathcal{S} \in \mathcal{I}_\mathrm{u}\cap\mathcal{I}_p
   \end{equation} 
with $\mathcal{I}_\mathrm{u}=\{\mathcal{A}\subseteq\mathcal{N}:|\mathcal{A}|\leq N-L\}$ and $\mathcal{I}_p=\{\mathcal{A}\subseteq\mathcal{N}:|\mathcal{A}\cap\mathcal{N}_i|\leq N_i-K_i\;\;i=1,\dots,R\}$.
Theorem~\ref{thm:submodular_matroid} gives, therefore, all the ingredients to assess the near-optimality of Algorithm~\ref{alg:greedy_matroid} applied on \eqref{eq:submodular_max_kron_matroid}, for which the results are particularized as the following corollary.

\begin{corollary}\label{corollary:bound_G}
  The greedy solution $\mathcal{S}_\text{greedy}$ to \eqref{eq:submodular_max_kron_matroid} obtained from Algorithm~\ref{alg:greedy_matroid}) is $1/2$-near-optimal, i.e.,
$G(\mathcal{S}_\text{greedy})\geq \frac{1}{2}G(\mathcal{S}^\star).$
\end{corollary}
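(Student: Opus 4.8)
The plan is to invoke Theorem~\ref{thm:submodular_matroid} directly, so the only real work is to check that all of its hypotheses are met by the optimization problem~\eqref{eq:submodular_max_kron_matroid}. There are two things to verify: first, that the objective $G(\mathcal{S})$ is normalized, monotone non-decreasing, and submodular on $2^\mathcal{N}$; and second, that the feasible region is the intersection of exactly $T=2$ matroids defined over the common ground set $\mathcal{N}$. The first point is precisely the content of Theorem~\ref{thm:submodular_G}, so it can be cited without further argument. For the second point, I would note that $\mathcal{I}_\mathrm{u}=\{\mathcal{A}\subseteq\mathcal{N}:|\mathcal{A}|\leq N-L\}$ is a uniform matroid over $\mathcal{N}$ (Example on uniform matroids) and that $\mathcal{I}_p=\{\mathcal{A}\subseteq\mathcal{N}:|\mathcal{A}\cap\mathcal{N}_i|\leq N_i-K_i,\ i=1,\dots,R\}$ is a partition matroid over $\mathcal{N}$ (Example~\ref{ex:partition_matroid}), since $\{\mathcal{N}_i\}_{i=1}^R$ partition $\mathcal{N}$. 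Hence the constraint set in~\eqref{eq:submodular_max_kron_matroid} is the intersection of $T=2$ matroids.

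With these facts in place, the conclusion is immediate: Algorithm~\ref{alg:greedy_matroid} run on~\eqref{eq:submodular_max_kron_matroid} returns a set $\mathcal{S}_\text{greedy}$ with
\begin{equation*}
  G(\mathcal{S}_\text{greedy}) \geq \frac{1}{T+1}\,G(\mathcal{S}^\star) = \frac{1}{2}\,G(\mathcal{S}^\star),
\end{equation*}
which is exactly the claimed $1/2$-near-optimality. So the corollary is really just the instantiation $T=2$ of the general theorem, combined with the structural observations that the two constraints are matroids.

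I would also make one small sanity remark as part of the proof, namely that the feasible region is nonempty and the greedy algorithm terminates after $K := N-L$ steps (or stops earlier when no feasible augmenting element exists), so the guarantee is not vacuous; this follows from the standing assumption $N_i > K_i$ for all $i$, which ensures each per-domain budget $N_i-K_i$ is at least $1$ and that the uniform-matroid budget $N-L$ is compatible with the identifiability requirement $L\geq K$. This is routine and I would state it in a sentence rather than belabor it.

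The main obstacle is essentially nonexistent at this level, since the heavy lifting — establishing submodularity and monotonicity of $G$ through the complementation trick — is already discharged in Theorem~\ref{thm:submodular_G} (proved in the appendix). The only thing one must be careful about is a bookkeeping point: Theorem~\ref{thm:submodular_matroid} requires all matroids to share the \emph{same} ground set $\mathcal{N}$, so I would explicitly point out that both $\mathcal{M}_\mathrm{u}$ and $\mathcal{M}_p$ are defined over $\mathcal{N}=\bigcup_{i=1}^R\mathcal{N}_i$ and not over the individual $\mathcal{N}_i$, which is what makes the partition-matroid description legitimate and the theorem applicable with $T=2$.
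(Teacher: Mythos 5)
There is a genuine gap, and it sits exactly at the final step. You model the feasible region of \eqref{eq:submodular_max_kron_matroid} as the intersection of $T=2$ matroids (the uniform matroid $\mathcal{M}_\mathrm{u}$ and the partition matroid $\mathcal{M}_\mathrm{p}$) and then invoke Theorem~\ref{thm:submodular_matroid}. But with $T=2$ that theorem only yields
\begin{equation*}
  G(\mathcal{S}_\text{greedy}) \;\geq\; \frac{1}{T+1}\,G(\mathcal{S}^\star) \;=\; \frac{1}{3}\,G(\mathcal{S}^\star),
\end{equation*}
not $\frac{1}{2}G(\mathcal{S}^\star)$; your chain $\frac{1}{T+1}=\frac{1}{2}$ with $T=2$ is an arithmetic slip, and it masks the fact that your argument, carried out correctly, proves only a $1/3$ guarantee, which is strictly weaker than what the corollary claims.

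The missing idea is that $\mathcal{I}_\mathrm{u}\cap\mathcal{I}_\mathrm{p}$ is not merely an intersection of two matroids but is itself the family of independent sets of a \emph{single} matroid over $\mathcal{N}$ --- the truncated partition matroid of the paper's third example (equivalently, the truncation of $\mathcal{M}_\mathrm{p}$ to rank $N-L$, and truncations of matroids are matroids). Once the constraint is recognized as one matroid, Theorem~\ref{thm:submodular_matroid} applies with $T=1$ and delivers the factor $\frac{1}{T+1}=\frac{1}{2}$. This is precisely the route the paper takes. Everything else in your write-up --- citing Theorem~\ref{thm:submodular_G} for normalization, monotonicity, and submodularity of $G$, checking that both constraint families live on the common ground set $\mathcal{N}$, and the remark on nonemptiness of the feasible region under $N_i>K_i$ --- is fine; only the matroid count needs to be repaired.
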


\begin{proof}
  Follows from Theorem~\ref{thm:submodular_matroid}, and since \eqref{eq:submodular_max_kron_matroid} has $T=1$ (truncated-partition) matroid constraint. 
\end{proof}

Next, we compute an explicit bound with respect to the frame potential of $\matx{\Psi}$, which is the objective function we initially wanted to minimize. This bound is given in the following theorem.

\begin{theorem}\label{thm:approx_fp}
  The greedy solution $\mathcal{L}_\text{greedy}$ to \eqref{eq:submodular_max_kron_matroid} obtained from Algorithm~\ref{alg:greedy_matroid} is near optimal with respect to the frame potential as
$ F(\mathcal{L}_\text{greedy})\leq \gamma F(\mathcal{L}^\star)$
  with $\gamma=\cfrac{1}{2}\left(\cfrac{K}{L_{\text{min}}^2}\prod_{i=1}^R F_i(\mathcal{N}_i)+1\right)$, and $L_\text{min}=\min_{i\in\mathcal{L}}\norm{\vec{u}_i}^2_2$, being $\vec{u}_i$ the $i$th row of $(\matx{U}_1\otimes\dots\otimes\matx{U}_R)$.
\end{theorem}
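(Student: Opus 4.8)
The plan is to translate the $1/2$-near-optimality guarantee that Corollary~\ref{corollary:bound_G} gives for the surrogate $G$ into a multiplicative bound on the frame potential $F$ itself, the bridge being an elementary lower bound on $F$ valid for every feasible selection. Write $\mathcal{L}_{\text{greedy}}=\mathcal{N}\setminus\mathcal{S}_{\text{greedy}}$ and $\mathcal{L}^\star=\mathcal{N}\setminus\mathcal{S}^\star$, where $\mathcal{S}_{\text{greedy}}$ and $\mathcal{S}^\star$ are the greedy and the optimal solutions of \eqref{eq:submodular_max_kron_matroid}; because $G(\mathcal{S})=F(\mathcal{N})-F(\mathcal{N}\setminus\mathcal{S})$ [cf.~\eqref{eq:set_func}] and, by Theorem~\ref{thm:submodular_G}, $G$ is monotone non-decreasing, maximizing $G$ is the same as minimizing $F$ over the feasible sets, so $\mathcal{L}^\star$ is precisely the frame-potential-optimal sampling set. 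Substituting $G(\mathcal{S}_{\text{greedy}})\ge\tfrac12 G(\mathcal{S}^\star)$ and rearranging gives
\begin{equation*}
  F(\mathcal{L}_{\text{greedy}})\;\le\;\tfrac12 F(\mathcal{N})+\tfrac12 F(\mathcal{L}^\star).
\end{equation*}

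The remaining task is to bound the stray additive term $\tfrac12 F(\mathcal{N})$ by a multiple of $F(\mathcal{L}^\star)$, and this is where the identifiability constraint \eqref{eq:identifiability_constraint} is used. I would isolate the lemma: any feasible $\mathcal{L}$ (i.e.\ with $|\mathcal{L}_i|\ge K_i$) satisfies $F(\mathcal{L})\ge L_{\min}^2/K$. To prove it, write $F(\mathcal{L})=\tr{\matx{T}^2(\mathcal{L})}$ with $\matx{T}(\mathcal{L})=\matx{\Psi}^H(\mathcal{L})\matx{\Psi}(\mathcal{L})\in\C^{\tilde{K}\times\tilde{K}}$ Hermitian positive semidefinite; the Cauchy--Schwarz (power-mean) inequality applied to its eigenvalues gives $\tr{\matx{T}^2}\ge(\tr{\matx{T}})^2/\tilde{K}$, and $\tr{\matx{T}}=\|\matx{\Psi}(\mathcal{L})\|_F^2=\sum_{j\in\tilde{\mathcal{L}}}\|\vec{u}_j\|_2^2\ge\tilde{L}\,L_{\min}$ since each of the $\tilde{L}$ retained rows of $\matx{U}_1\otimes\dots\otimes\matx{U}_R$ has squared norm at least $L_{\min}$. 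Hence $F(\mathcal{L})\ge\tilde{L}^2 L_{\min}^2/\tilde{K}$, and because $|\mathcal{L}_i|\ge K_i$ forces $\tilde{L}=\prod_i L_i\ge\prod_i K_i=\tilde{K}\ge1$, we get $F(\mathcal{L})\ge\tilde{K}L_{\min}^2\ge L_{\min}^2\ge L_{\min}^2/K$; the last two, deliberately loose, inequalities are what make $K$ surface in $\gamma$.

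Finally I would combine the two ingredients. Applying the lemma to $\mathcal{L}^\star$ gives $1\le K F(\mathcal{L}^\star)/L_{\min}^2$, so, using $F(\mathcal{N})=\prod_{i=1}^R F_i(\mathcal{N}_i)$ from \eqref{eq:kron_fp}, $F(\mathcal{N})\le\tfrac{K}{L_{\min}^2}\big(\prod_{i=1}^R F_i(\mathcal{N}_i)\big)F(\mathcal{L}^\star)$. Plugging this into the displayed inequality yields $F(\mathcal{L}_{\text{greedy}})\le\tfrac12\big(\tfrac{K}{L_{\min}^2}\prod_{i=1}^R F_i(\mathcal{N}_i)+1\big)F(\mathcal{L}^\star)=\gamma F(\mathcal{L}^\star)$, which is the claim. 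The only step that is not pure bookkeeping is the frame-potential lower bound: the point is that the partition-matroid (identifiability) constraint does more than make $\matx{T}(\mathcal{L})$ invertible --- it guarantees at least $\tilde{K}$ retained rows, which is exactly what is needed to bound $(\tr{\matx{T}})^2/\tilde{K}$ away from zero; once that is in hand, everything else is the complement change of variables and the arithmetic converting an additive guarantee into a multiplicative one.
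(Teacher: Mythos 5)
Your argument is correct and is essentially the proof the paper defers to: the paper only cites the $R=1$ derivation of the frame-potential bound, which proceeds exactly as you do — rearrange the $1/2$-guarantee on $G$ into $F(\mathcal{L}_{\text{greedy}})\le\tfrac{1}{2}F(\mathcal{N})+\tfrac{1}{2}F(\mathcal{L}^\star)$, then absorb the additive $F(\mathcal{N})$ term via the Cauchy--Schwarz lower bound $F(\mathcal{L}^\star)\ge(\tr{\matx{T}(\mathcal{L}^\star)})^2/\tilde{K}\ge L_{\min}^2/K$ that the partition-matroid (identifiability) constraint makes possible. The only point worth flagging is that your lemma requires $L_{\min}$ to lower-bound the squared row norms of the \emph{optimal} selection $\tilde{\mathcal{L}}^\star$, so the $\min_{i\in\mathcal{L}}$ in the theorem statement must be read over (at least) $\tilde{\mathcal{L}}^\star$ or the full ground set — which your derivation implicitly and harmlessly assumes.
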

\begin{proof}
Obtained similar to the bound in \cite{frame_potential}, but specialized for \eqref{eq:kron_fp} and 1/2-near-optimality; see~\cite{thesis} for details.    
\end{proof}

As with the $R=1$ case in~\cite{frame_potential}, $\gamma$ is heavily influenced by the frame potential of $(\matx{U}_1\otimes\dots\otimes\matx{U}_R)$. Specifically, approximation gets tighter when $F(\mathcal{N})$ is small or the core tensor dimensionality decreases.

\subsubsection{Computational complexity} 

The running time of Algorithm~\ref{alg:greedy_matroid} applied to solve \eqref{eq:submodular_max_kron_matroid} can greatly be reduced by precomputing the inner products between the rows of every $\matx{U}_i$ before starting the iterations. This has a complexity of $\mathcal{O}(N_i^2K_i)$ for each domain. Once these inner products are computed, in each iteration we need to find $R$ times the maximum over $\mathcal{O}(N_i)$ elements. Since we run $N-L$ iterations, the complexity of all iterations is $\mathcal{O}(N^2_\text{max})$, with $N_\text{max}=\max_i N_i$. Therefore, the total computational complexity of the greedy method is $\mathcal{O}(N_\text{max}^2K_\text{max})$ with $K_\text{max}=\max_i K_i$.

\subsubsection{Practical considerations}\label{sec:practical_fp}

Due to the characteristics of the greedy iterations, the algorithm tends to give solutions with a very unbalanced cardinality. In particular, for most situations, the algorithm chooses one of the domains in the first few iterations and empties that set till it hits the identifiability constraint of that domain. Then, it proceeds to another domain and empties it as well, and so on.  This is due to the objective function, which is a product of smaller objectives. Indeed, if we are asked to minimize a product of two elements by subtracting a value from them, it is generally better to subtract from the smallest element. Hence, if this minimization is performed multiple times we will tend to remove always from the same element.

The consequences of this behavior are twofold. On the one hand, this greedy method tends to give a sensor placement that yields a very small number of samples $\tilde{L}$, as we will also see in the simulations. Therefore, when comparing this method to other sensor selection schemes that produce solutions with a larger $\tilde{L}$ it generally ranks worse in MSE for a given $L$. On the other hand, the solution of this scheme tends to be tight on the identifiability constraints for most of the domains, thus hampering the performance on those domains. This implication, however, has a simple solution. By introducing a small slack variable $\alpha_i>0$ to the constraints, we can obtain a sensor selection which is not tight on the constraints. This amounts to solving the problem
\begin{align*}
  &\maximize_{\mathcal{S}\subseteq\mathcal{N}}  G(\mathcal{S})\numberthis\\
  &\text{s. to } \quad |\mathcal{S}|= N-L, 
   |\mathcal{S}\cap\mathcal{N}_i|\leq N_i-K_i-\alpha_i,  i=1,\dots,R.
\end{align*}
Tuning $\{\alpha_i\}_{i=1}^R$ allows to regularize the tradeoff between compression and accuracy of the greedy solution.

We conclude this section with a remark on an alternative performance measure. 
\begin{remark} As a alternative performance measure, one can think on maximizing the set function $\log \det{\matx{T}(\mathcal{L})}$. Although this set function can be shown to be submodular over all subsets of $\cal{N}$~\cite{thesis}, the related greedy algorithm cannot be  constrained to always result in an identifiable system after subsampling. Thus, its applicability is more limited than the frame potential formulation; see~\cite{thesis} for a more detailed discussion. 
\end{remark}

\section{Diagonal core sampling}\label{sec:diag_sampling}

So far, we have focused on the case when $\mathbfcal{G}$ is dense and has no particular structure. In that case, we have seen that we require at least $\sum_{i=1}^R K_i$ sensors to recover our signal with a finite MSE. In many cases of interest, $\mathbfcal{G}$ admits a structure. In particular, in this section, we investigate the case when $\mathbfcal{G}$ is a diagonal tensor. Under some mild conditions on the entries of $\{\matx{U}\}_{i=1}^R$, we can leverage the structure of $\mathbfcal{G}$ to further increase the compression. As before, we  develop an efficient and near-optimal greedy algorithm based on minimizing the frame potential to design the sampling set.

\subsection{Identifiability conditions} \label{sec:identifiability_constraints}

In contrast to the dense core case, the number of unknowns in a multilinear system of equations with a diagonal core does not increase with the tensor order, whereas for a dense core it grows exponentially. This means that when sampling signals with a diagonal core decomposition, one can expect a stronger compression. 

To derive the identifiability conditions for \eqref{eq:sampling_eq_kr}, we present the result from~\cite{rank_kr} as the following theorem.
\begin{theorem}[Rank of Khatri-Rao product \cite{rank_kr}]\label{thm:rank_rk}
Let $\matx{A}\in\C^{N\times K}$ and $\matx{B}\in\C^{M\times K}$ be two matrices with no all-zero column. Then,
  \begin{equation*}
    \mathrm{rank}(\matx{A}\odot\matx{B})\geq\max\{\mathrm{rank}(\matx{A}),\mathrm{rank}(\matx{B})\}.
  \end{equation*}
\end{theorem}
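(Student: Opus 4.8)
The plan is to prove the two inequalities $\mathrm{rank}(\matx{A}\odot\matx{B})\ge\mathrm{rank}(\matx{A})$ and $\mathrm{rank}(\matx{A}\odot\matx{B})\ge\mathrm{rank}(\matx{B})$ separately; since interchanging $\matx{A}$ and $\matx{B}$ only permutes the rows of the Khatri--Rao product (the perfect-shuffle permutation sends $\vec{a}\otimes\vec{b}$ to $\vec{b}\otimes\vec{a}$) and hence preserves rank, the second inequality follows from the first by symmetry, so it suffices to establish $\mathrm{rank}(\matx{A}\odot\matx{B})\ge\mathrm{rank}(\matx{A})$. Writing $\matx{A}=[\vec{a}_1,\dots,\vec{a}_K]$ and $\matx{B}=[\vec{b}_1,\dots,\vec{b}_K]$ columnwise, the $j$th column of $\matx{A}\odot\matx{B}$ is $\vec{a}_j\otimes\vec{b}_j$. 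Set $r=\mathrm{rank}(\matx{A})$ and choose indices $i_1,\dots,i_r$ with $\vec{a}_{i_1},\dots,\vec{a}_{i_r}$ linearly independent. The goal is then to show that the matching columns $\vec{a}_{i_1}\otimes\vec{b}_{i_1},\dots,\vec{a}_{i_r}\otimes\vec{b}_{i_r}$ of $\matx{A}\odot\matx{B}$ are linearly independent, which immediately gives $\mathrm{rank}(\matx{A}\odot\matx{B})\ge r$.

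For the core step I would assume $\sum_{j=1}^r c_j\,(\vec{a}_{i_j}\otimes\vec{b}_{i_j})=\vec{0}$ and read this vector identity off entrywise: with the indexing in which the $(p,q)$ entry of $\vec{a}\otimes\vec{b}$ equals $a_p b_q$, it says $\sum_{j=1}^r c_j\,(\vec{a}_{i_j})_p (\vec{b}_{i_j})_q=0$ for every $p$ and $q$. Fixing $q$ and letting $p$ range gives $\sum_{j=1}^r\big(c_j (\vec{b}_{i_j})_q\big)\,\vec{a}_{i_j}=\vec{0}$, so by linear independence of the $\vec{a}_{i_j}$ we obtain $c_j (\vec{b}_{i_j})_q=0$ for all $j$ and all $q$. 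Now fix any $j$ with $c_j\ne 0$: then $(\vec{b}_{i_j})_q=0$ for every $q$, i.e. $\vec{b}_{i_j}=\vec{0}$, contradicting the hypothesis that $\matx{B}$ has no all-zero column. Hence every $c_j=0$, which proves the claimed independence. Repeating the argument with the roles of $\matx{A}$ and $\matx{B}$ swapped (now using that $\matx{A}$ has no all-zero column) yields $\mathrm{rank}(\matx{A}\odot\matx{B})\ge\mathrm{rank}(\matx{B})$, and combining the two bounds gives $\mathrm{rank}(\matx{A}\odot\matx{B})\ge\max\{\mathrm{rank}(\matx{A}),\mathrm{rank}(\matx{B})\}$.

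There is no deep obstacle here; the one point that must be handled carefully is the use of the no-all-zero-column assumption, which is precisely what lets one pass from $c_j(\vec{b}_{i_j})_q=0$ to $c_j=0$ — and the claim genuinely fails without it, since a vanishing column of $\matx{B}$ produces a zero column of $\matx{A}\odot\matx{B}$ and can make the rank drop. An alternative would be to start from the identity $(\matx{A}\odot\matx{B})^H(\matx{A}\odot\matx{B})=\matx{A}^H\matx{A}\circ\matx{B}^H\matx{B}$ recalled in the preliminaries and lower-bound the rank of a Hadamard product, but lower bounds on Hadamard-product ranks are generally delicate, so I would favour the elementary columnwise argument above.
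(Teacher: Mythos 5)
Your proof is correct. The paper itself gives no argument for this statement---it simply imports it from the cited reference---so there is nothing to compare against; your elementary columnwise argument (identify $\vec{a}\otimes\vec{b}$ with the matrix $\vec{a}\vec{b}^T$, read the linear relation off column by column, and invoke linear independence of the chosen $\vec{a}_{i_j}$ together with $\vec{b}_{i_j}\neq\vec{0}$) is the standard one and is complete, including the correct reduction of the second inequality to the first via the perfect-shuffle row permutation. Your remark about where the no-all-zero-column hypothesis enters, and why the bound fails without it, is exactly the right point to flag.
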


Based on the above theorem, we can give the following sufficient conditions for identifiability of the system \eqref{eq:sampling_eq_kr}. 
\begin{corollary}
Let $z_i$ denote the maximum number of zero entries in any column of $\matx{U}_i$. If for every $\matx{\Psi}_i(\mathcal{L}_i)$ we have $|\mathcal{L}_i|>z_i$, and there is at least one $\matx{\Psi}_j$ with $\mathrm{rank}(\matx{\Psi}_j)=K_\text{c}$, then $\matx{\Psi}(\mathcal{L})$ has full column rank.
\end{corollary}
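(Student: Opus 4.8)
The plan is to reduce the statement to an iterated application of Theorem~\ref{thm:rank_rk}. First I would check that the hypothesis $|\mathcal{L}_i|>z_i$ guarantees that every factor $\matx{\Psi}_i(\mathcal{L}_i)=\matx{\Phi}_i(\mathcal{L}_i)\matx{U}_i$ has no all-zero column. Indeed, the $k$th column of $\matx{\Psi}_i(\mathcal{L}_i)$ consists of the $|\mathcal{L}_i|$ entries of the $k$th column of $\matx{U}_i$ indexed by $\mathcal{L}_i$; since that column of $\matx{U}_i$ contains at most $z_i<|\mathcal{L}_i|$ zeros, at least one retained entry is nonzero, so the column of $\matx{\Psi}_i(\mathcal{L}_i)$ is nonzero. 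Hence no factor has a vanishing column, which is precisely the regularity hypothesis required by Theorem~\ref{thm:rank_rk}.

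Next I would note that the property ``no all-zero column'' is preserved under the Khatri-Rao product: the $k$th column of $\matx{A}\odot\matx{B}$ is $\matx{a}_k\otimes\matx{b}_k$, and this is nonzero whenever $\matx{a}_k\ne\vec{0}$ and $\matx{b}_k\ne\vec{0}$. Consequently every partial product $\matx{\Psi}_1(\mathcal{L}_1)\odot\dots\odot\matx{\Psi}_m(\mathcal{L}_m)$ with $m\le R$ has no all-zero column, so Theorem~\ref{thm:rank_rk} can be applied at each stage of an induction on $m$. Carrying out that induction gives $\mathrm{rank}\big(\matx{\Psi}_1(\mathcal{L}_1)\odot\dots\odot\matx{\Psi}_m(\mathcal{L}_m)\big)\ge\max_{i\le m}\mathrm{rank}(\matx{\Psi}_i(\mathcal{L}_i))$. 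Taking $m=R$ and invoking the assumption that some factor satisfies $\mathrm{rank}(\matx{\Psi}_j(\mathcal{L}_j))=K_{\rm c}$ yields $\mathrm{rank}(\matx{\Psi}(\mathcal{L}))\ge K_{\rm c}$. Since $\matx{\Psi}(\mathcal{L})\in\C^{\tilde{L}\times K_{\rm c}}$ has exactly $K_{\rm c}$ columns, this forces $\matx{\Psi}(\mathcal{L})$ to have full column rank, i.e., the system \eqref{eq:sampling_eq_kr} is identifiable.

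I do not anticipate a genuine obstacle; the argument is essentially bookkeeping around Theorem~\ref{thm:rank_rk}. The two points that require a little care are the pigeonhole step establishing the no-all-zero-column property of each $\matx{\Psi}_i(\mathcal{L}_i)$, and the verification that this property (a hypothesis of Theorem~\ref{thm:rank_rk}) is maintained at every step of the induction, not just for the individual factors. It is also worth remarking in passing that the existence of a factor with $\mathrm{rank}(\matx{\Psi}_j(\mathcal{L}_j))=K_{\rm c}$ implicitly requires $|\mathcal{L}_j|\ge K_{\rm c}$, so the corollary is non-vacuous only when at least one domain is sampled densely enough; combined with $|\mathcal{L}_i|>z_i$ for the remaining $R-1$ domains, this recovers the bound $L\ge K_{\rm c}+R-1$ announced in the contributions.
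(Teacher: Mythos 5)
Your proof is correct and follows essentially the same route as the paper: pigeonhole to rule out all-zero columns in each $\matx{\Psi}_i(\mathcal{L}_i)$, then Theorem~\ref{thm:rank_rk} to lower-bound the rank of the Khatri-Rao product by the rank of the full-rank factor. Your explicit induction on the number of factors (and the check that the no-all-zero-column property survives each partial Khatri-Rao product) fills in a detail the paper applies implicitly for $R>2$, but it is the same argument.
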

\begin{proof}
Selecting $L_i>z_i$ rows from each $\matx{U}_i$ ensures that no $\matx{\Psi}_i$ will have an all-zero column. Then, if for at least one $\matx{\Psi}_j$ we have $\mathrm{rank}(\matx{\Psi}_j)=K_\text{c}$, then due to Theorem~\ref{thm:rank_rk} we have
  \begin{align*}
    \mathrm{rank}(\matx{\Psi}(\mathcal{L}))&\geq\max_{i=1,\dots,R}\{\mathrm{rank}(\matx{\Psi}_i)\}\\
    &=\max\left\{\mathrm{rank}(\matx{\Psi}_j),\max_{i\neq j}\{\mathrm{rank}(\matx{\Psi}_j)\}\right\}=K_\text{c}.
  \end{align*}
\end{proof}
Therefore, in order to guarantee identifiability we need to select $L_j\geq \max\{K_\text{c}, z_j+1\}$ rows from any factor matrix $j$, and $L_i\geq\max\{1, z_i+1\}$ from the other factors with $i\neq j$. In many scenarios, we usually have $\{z_i=0\}_{i=1}^R$ since no entry in $\{\matx{U}_i\}_{i=1}^R$ will exactly be zero. In those situations we will require to select at least $L=\sum_{i=1}^R L_i\geq K_\text{c} + R -1$ elements.

\subsection{Greedy method}

As we did for the case with a dense core, we start by finding an expression for the frame potential of a Khatri-Rao product in terms of its factors. The Grammian matrix $\matx{T}(\mathcal{L})$ of a diagonal core tensor decomposition has the form
\begin{align*}
  \matx{T}&=\matx{\Psi}^H\matx{\Psi}=\left(\matx{\Psi}_1\odot\dots\odot\matx{\Psi}_R\right)^H\left(\matx{\Psi}_1\odot\dots\odot\matx{\Psi}_R\right)\\
  &=\matx{\Psi}_1^H\matx{\Psi}_1\circ\dots\circ\matx{\Psi}_R^H\matx{\Psi}_R=\matx{T}_1\circ\dots\circ\matx{T}_R.
\end{align*}
Using this expression, the frame potential of a Khatri-Rao product becomes
\begin{equation}
  \FP{\matx{\Psi}}=\tr{\matx{T}^H\matx{T}}=\norm{\matx{T}}_F^2=\norm{\matx{T}_1\circ\dots\circ\matx{T}_R}_F^2.\label{eq:kr_fp}
\end{equation}
For brevity, we will denote the frame potential as an explicit function of the selected set as
\begin{equation}
  P(\mathcal{L})\coloneqq\FP{\matx{\Psi}(\mathcal{L})}=\norm{\matx{T}_1(\mathcal{L}_1)\circ\dots\circ\matx{T}_R(\mathcal{L}_R)}_F^2.
\end{equation}

Unlike in the dense core case, the frame potential of a Khatri-Rao product cannot be separated in terms of the frame potential of its factors. Instead, \eqref{eq:kr_fp} decomposes the frame potential using the Hadamard product of the Grammian of the factors.

\subsubsection{Submodularity of $P(\mathcal{L})$}

Since $P(\mathcal{L})$ does not directly satisfy the conditions [cf. Theorem~\ref{thm:submodular_matroid}]  required for near optimality of the greedy heuristic, we propose using the following set function $Q:2^\mathcal{N}\rightarrow\R$ as a surrogate for the frame potential
\begin{equation}
  Q(\mathcal{S})\coloneqq P(\mathcal{N})-P(\mathcal{N}\setminus\mathcal{S})
  \label{eq:set_func_2}
\end{equation}
with $P(\mathcal{N}) =\norm{\matx{T}_1(\mathcal{N}_1)\circ\dots\circ \matx{T}_r(\mathcal{N}_r)}_F^2$ and 
$P(\mathcal{N}\setminus\mathcal{S}) = \norm{\matx{T}_1(\mathcal{N}_1\setminus\mathcal{S}_1)\circ\dots\circ \matx{T}_R(\mathcal{N}_R\setminus\mathcal{S}_R)}_F^2.$

\begin{theorem}\label{thm:submodular_Q}
   The set function $Q(\mathcal{S})$ defined in \eqref{eq:set_func_2} is a normalized, monotone non-decreasing, submodular function for all subsets of $\mathcal{N}=\bigcup_{i=1}^R\mathcal{N}_i$. 
\end{theorem}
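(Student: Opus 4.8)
The plan is to establish the three properties of $Q$ in sequence, reducing each to an analogous statement about the map $\mathcal{S} \mapsto P(\mathcal{N}\setminus\mathcal{S})$, which is the only nontrivial ingredient. Normalization is immediate: $Q(\varnothing) = P(\mathcal{N}) - P(\mathcal{N}) = 0$. For monotonicity, adding an index $x$ to $\mathcal{S}$ replaces $\mathcal{N}_i\setminus\mathcal{S}_i$ by a strictly smaller row-index set for exactly one domain $i$; since $\matx{T}_i(\mathcal{L}_i) = (\matx{\Phi}_i\matx{U}_i)^H(\matx{\Phi}_i\matx{U}_i) = \sum_{k\in\mathcal{L}_i}\vec{u}_{i,k}\vec{u}_{i,k}^H$ is a sum of rank-one PSD terms over the selected rows, removing a row from $\mathcal{L}_i$ can only decrease $\matx{T}_i$ in the PSD order. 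I would then argue that $\matx{A}\mapsto\|\matx{A}_1\circ\dots\circ\matx{A}_R\|_F^2$ is monotone non-decreasing in each PSD argument when all factors are PSD: writing $\|\matx{A}_1\circ\dots\circ\matx{A}_R\|_F^2 = \sum_{p,q}\prod_{i=1}^R |(\matx{A}_i)_{pq}|^2$ and using that for PSD $\matx{A}_i\preceq\matx{B}_i$ one has $|(\matx{A}_i)_{pq}|^2 \le (\matx{A}_i)_{pp}(\matx{A}_i)_{qq}$ — this bound alone is not quite enough, so I would instead invoke the Schur product theorem to note $\matx{A}_1\circ\dots\circ\matx{A}_R$ is itself PSD and that $\|\matx{M}\|_F^2 = \tr(\matx{M}^2)$ is monotone on the PSD cone, combined with the fact that the Hadamard product preserves the PSD order ($\matx{A}_i\preceq\matx{B}_i \Rightarrow \matx{A}_1\circ\dots\circ\matx{A}_R \preceq \matx{B}_1\circ\dots\circ\matx{B}_R$, proved by inserting one factor at a time). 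Hence $P(\mathcal{N}\setminus\mathcal{S})$ is non-increasing in $\mathcal{S}$, so $Q$ is monotone non-decreasing.

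For submodularity, I would verify the diminishing-returns inequality $Q(\mathcal{S}\cup\{x\}) - Q(\mathcal{S}) \ge Q(\mathcal{S}\cup\{x,y\}) - Q(\mathcal{S}\cup\{y\})$ directly. Substituting the definition of $Q$, this is equivalent to the supermodularity of $\mathcal{S}\mapsto P(\mathcal{N}\setminus\mathcal{S})$, i.e.
\begin{equation*}
  P(\mathcal{N}\setminus\mathcal{S}) - P(\mathcal{N}\setminus(\mathcal{S}\cup\{x\})) \ge P(\mathcal{N}\setminus(\mathcal{S}\cup\{y\})) - P(\mathcal{N}\setminus(\mathcal{S}\cup\{x,y\})).
\end{equation*}
Here I would split into cases according to which domains $x$ and $y$ belong to. If $x\in\mathcal{N}_i$ and $y\in\mathcal{N}_j$ with $i\ne j$, the two sides involve perturbing independent factors, and expanding $P = \sum_{p,q}\prod_{\ell}|(\matx{T}_\ell)_{pq}|^2$ shows both sides are equal (the cross term factorizes), so the inequality holds with equality. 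The substantive case is $x,y$ in the same domain $i$: then the whole inequality lives inside domain $i$ with the other factors acting as fixed nonnegative weights $w_{pq} = \prod_{\ell\ne i}|(\matx{T}_\ell)_{pq}|^2 \ge 0$, and reduces to showing that $\mathcal{L}_i \mapsto \sum_{p,q} w_{pq}\,|(\matx{T}_i(\mathcal{L}_i))_{pq}|^2$ is supermodular in the removed rows — equivalently that $\mathcal{S}_i\mapsto \sum_{p,q} w_{pq}|(\sum_{k\notin\mathcal{S}_i}\vec{u}_{i,k}\vec{u}_{i,k}^H)_{pq}|^2$ is supermodular. Expanding the squared modulus as a double sum over $k,k'\notin\mathcal{S}_i$ of the nonnegative quantities $w_{pq}\,\overline{(\vec{u}_{i,k}\vec{u}_{i,k}^H)_{pq}}(\vec{u}_{i,k'}\vec{u}_{i,k'}^H)_{pq}$, I would use the elementary fact that for any nonnegative symmetric array $(a_{kk'})$ the set function $\mathcal{S}\mapsto \sum_{k,k'\notin\mathcal{S}} a_{kk'}$ is supermodular (its second difference with respect to removing $x$ then $y$ equals $a_{xy}+a_{yx}\ge 0$ with the right sign), which closes the argument.

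The main obstacle I anticipate is the same-domain submodularity step: one must be careful that the relevant quantities entering the double-sum expansion — the products $w_{pq}\,(\vec{u}_{i,k})_p\overline{(\vec{u}_{i,k})_q}\,\overline{(\vec{u}_{i,k'})_p}(\vec{u}_{i,k'})_q$ summed over $p,q$ — are genuinely nonnegative after summing over $p$ and $q$, which they are because the sum over $p,q$ of $w_{pq}$ times $|(\vec{u}_{i,k}^H\vec{u}_{i,k'})|^2$-type terms is a Gram-like nonnegative expression; equivalently the array $a_{kk'} = \sum_{p,q} w_{pq}\,(\vec{u}_{i,k}\vec{u}_{i,k}^H)_{pq}^*(\vec{u}_{i,k'}\vec{u}_{i,k'}^H)_{pq}$ is entrywise nonnegative because $w_{pq}\ge 0$ and it is a Hadamard-weighted inner product of two rank-one PSD matrices. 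Pinning down this nonnegativity cleanly (perhaps by writing $a_{kk'} = \tr((\matx{W}\circ\vec{u}_{i,k}\vec{u}_{i,k}^H)\,\vec{u}_{i,k'}\vec{u}_{i,k'}^H)$ with $\matx{W} = (w_{pq})\succeq 0$ by Schur, and noting a product of two PSD matrices has nonnegative trace) is the only place real care is needed; everything else is bookkeeping, so I would relegate the detailed calculation to an appendix.
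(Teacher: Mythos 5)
There is a genuine gap in the cross-domain case of your submodularity argument. You claim that when $x\in\mathcal{N}_i$ and $y\in\mathcal{N}_j$ with $i\neq j$ the diminishing-returns inequality "holds with equality because the cross term factorizes." That is the situation for the \emph{dense}-core objective $F(\mathcal{L})=\prod_i F_i(\mathcal{L}_i)$, which is a genuine product over domains; but $P$ is not. Writing $h(\mathcal{S})=P(\mathcal{N}\setminus\mathcal{S})=\sum_{p,q} w_{pq}\,\alpha_{pq}\,\beta_{pq}$ with $\alpha_{pq}=\bigl|\bigl(\bar{\matx{T}}_i(\mathcal{S}_i)\bigr)_{pq}\bigr|^2$, $\beta_{pq}=\bigl|\bigl(\bar{\matx{T}}_j(\mathcal{S}_j)\bigr)_{pq}\bigr|^2$ and $w_{pq}$ collecting the remaining factors, the mixed second difference is
\begin{equation*}
h(\mathcal{S})-h(\mathcal{S}\cup\{x\})-h(\mathcal{S}\cup\{y\})+h(\mathcal{S}\cup\{x,y\})=\sum_{p,q}w_{pq}\,(\alpha_{pq}-\alpha'_{pq})(\beta_{pq}-\beta'_{pq}),
\end{equation*}
which is generically nonzero: the product only factorizes \emph{inside} each $(p,q)$ term, not after the sum. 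Worse, you cannot rescue the sign term by term, because the entrywise differences $\alpha_{pq}-\alpha'_{pq}$ need not be nonnegative: with rows $(1,1)$ and $(1,-1)$ in domain $i$, $\bar{\matx{T}}_i$ has zero off-diagonal entries, while after removing $(1,1)$ the off-diagonal squared modulus jumps from $0$ to $1$. So the case you dismiss as trivial is in fact the one that carries real content.

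The inequality is still true, but proving it requires the positive-semidefinite structure rather than entrywise positivity: the difference matrix with entries $\alpha_{pq}-\alpha'_{pq}$ equals $\matx{R}\circ\overline{\bar{\matx{T}}_i(\mathcal{S}_i)}+\bar{\matx{T}}_i(\mathcal{S}_i\cup\{x\})\circ\overline{\matx{R}}$ with $\matx{R}=\vec{u}_{i,x}\vec{u}_{i,x}^H$, hence is PSD by the Schur product theorem (this is exactly the factorization $\matx{T}_i(\{x\})\circ[\bar{\matx{T}}_i(\mathcal{A}_i)+\bar{\matx{T}}_i(\mathcal{A}_i\cup\{x\})]$ the paper uses); the same holds for the $j$-factor, the weight matrix $(w_{pq})$ is PSD, and the weighted inner product of PSD matrices is nonnegative. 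This is precisely the machinery the paper deploys for its second case, and it is the same tool you already identified for proving $a_{xy}\geq 0$ in the same-domain case — which, for the record, you handle correctly and rather elegantly via the rank-one double-sum expansion (your normalization and monotonicity arguments are also fine, and your PSD-order route to monotonicity is arguably cleaner than the paper's). Once you apply that same trace-of-PSD-products argument to the cross-domain second difference instead of asserting it vanishes, the proof closes.
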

\begin{proof}
  See Appendix~\ref{ap:proof_Q}.
\end{proof}

Using $Q$ and imposing the identifiability constraints defined in Section~\ref{sec:identifiability_constraints} we can write the related optimization problem for the minimization of the frame potential as
\begin{equation}
\label{eq:submodular_max_kr}
\maximize_{\mathcal{S}\subseteq\mathcal{N}}  Q(\mathcal{S}) \quad \text{s. to} \quad \mathcal{S}\in \mathcal{I}_u\cap\mathcal{I}_p
\end{equation}
where $\mathcal{I}_u=\{\mathcal{A}\subseteq\mathcal{N}:|\mathcal{S}|\leq N-L\}$ and $\mathcal{I}_p=\{\mathcal{A}\subseteq\mathcal{N}:|\mathcal{A}\cap\mathcal{N}_i|\leq \beta_i\quad i=1,\dots,R\}$ with $\beta_j = N_j-\max\{K_\text{c},z_j\}$ and $\beta_i=N_i-\max\{1,z_i+1\}\;\; \text{for}\;\;i\neq j$. Here, the choice of $j$ is arbitrary, and can be set depending on the application. For example, with some space-time signals it is more costly to sample space than time, and, in those cases, $j$ is generally chosen for the temporal domain.

This is a submodular maximization problem with a truncated partition matroid constraint [cf. Example~\ref{ex:partition_matroid} from Definition~\ref{def:matroid}]. Thus, from Theorem~\ref{thm:submodular_matroid}, we know that greedy maximization of \eqref{eq:submodular_max_kr} using Algorithm~\ref{alg:greedy_matroid} has a multiplicative near-optimal guarantee.

\begin{corollary}\label{corollary:bound_Q}
The greedy solution $\mathcal{S}_\text{greedy}$ to \eqref{eq:submodular_max_kr} obtained using Algorithm~\ref{alg:greedy_matroid} is $1/2$-near-optimal, i.e., 
$
    Q(\mathcal{S}_\text{greedy})\geq \frac{1}{2}Q(\mathcal{S}^\star). 
 $
 Here, $\mathcal{S}^\star$ denotes the optimal solution of \eqref{eq:submodular_max_kr}.
\end{corollary}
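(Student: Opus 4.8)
The plan is straightforward: this corollary is an immediate application of Theorem~\ref{thm:submodular_matroid} once the hypotheses are verified, so the proof mirrors that of Corollary~\ref{corollary:bound_G}. First I would note that the problem \eqref{eq:submodular_max_kr} is a maximization of $Q(\mathcal{S})$ over the intersection $\mathcal{I}_u\cap\mathcal{I}_p$. By Theorem~\ref{thm:submodular_Q}, the objective $Q$ is normalized, monotone non-decreasing, and submodular on all subsets of $\mathcal{N}$, which discharges the function-side hypotheses of Theorem~\ref{thm:submodular_matroid}.

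Next I would address the constraint side. The set $\mathcal{I}_u=\{\mathcal{A}\subseteq\mathcal{N}:|\mathcal{A}|\leq N-L\}$ is a uniform matroid, and $\mathcal{I}_p=\{\mathcal{A}\subseteq\mathcal{N}:|\mathcal{A}\cap\mathcal{N}_i|\leq\beta_i,\ i=1,\dots,R\}$ is a partition matroid since $\{\mathcal{N}_i\}_{i=1}^R$ partitions $\mathcal{N}$ (cf.\ Example~\ref{ex:partition_matroid}). Their intersection is therefore a truncated partition matroid, i.e., a single matroid; hence \eqref{eq:submodular_max_kr} is a submodular maximization subject to $T=1$ matroid constraint. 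Applying Theorem~\ref{thm:submodular_matroid} with $T=1$ yields $Q(\mathcal{S}_\text{greedy})\geq\frac{1}{T+1}Q(\mathcal{S}^\star)=\frac{1}{2}Q(\mathcal{S}^\star)$, which is exactly the claim.

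There is essentially no obstacle here; the only point requiring a moment's care is recognizing that the intersection of the uniform and partition matroids is itself a matroid (a truncated partition matroid), so that $T=1$ rather than $T=2$ in the near-optimality bound — treating it as $T=2$ would give the weaker factor $1/3$. Everything else is bookkeeping: the heavy lifting is done by Theorem~\ref{thm:submodular_Q} (submodularity of $Q$), which is proved separately in Appendix~\ref{ap:proof_Q} and may be invoked directly. I would therefore write the proof in one or two sentences, exactly paralleling the proof of Corollary~\ref{corollary:bound_G}.

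\begin{proof}
  Follows from Theorem~\ref{thm:submodular_matroid}. Indeed, by Theorem~\ref{thm:submodular_Q} the objective $Q$ is normalized, monotone non-decreasing, and submodular, and the feasible set $\mathcal{I}_u\cap\mathcal{I}_p$ is a truncated partition matroid, so \eqref{eq:submodular_max_kr} involves $T=1$ matroid constraint. Hence $Q(\mathcal{S}_\text{greedy})\geq\frac{1}{T+1}Q(\mathcal{S}^\star)=\frac{1}{2}Q(\mathcal{S}^\star)$.
\end{proof}
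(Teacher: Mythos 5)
Your proposal is correct and follows exactly the paper's route: invoke Theorem~\ref{thm:submodular_Q} for the properties of $Q$, observe that $\mathcal{I}_u\cap\mathcal{I}_p$ is a single truncated partition matroid so that $T=1$, and apply Theorem~\ref{thm:submodular_matroid} to obtain the factor $1/2$. The remark that mistaking the constraint for two separate matroids would degrade the bound to $1/3$ is exactly the right point of care, and mirrors the paper's proof of Corollary~\ref{corollary:bound_G}.
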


Similar to the dense core case, we can also provide a bound on the near-optimality of of the greedy solution with respect to the frame potential.
\begin{theorem}
The solution set $\mathcal{L}_\text{greedy} = \mathcal{N} \setminus \mathcal{S}_\text{greedy}$ obtained from Algorithm~\ref{alg:greedy_matroid} is near optimal with respect to the frame potential as
$
    P(\mathcal{L}_\text{greedy})\leq \gamma P(\mathcal{L}^\star),
 $
 with $\gamma=0.5\left(\norm{\matx{T}_1(\mathcal{N}_1)\circ\dots\circ \matx{T}_R(\mathcal{N}_R)}_F^2{KL_{\text{\rm min}}^{-2}}+1\right)$ and $\mathcal{L}^\star = \mathcal{N} \setminus~\mathcal{S}^\star$.
\end{theorem}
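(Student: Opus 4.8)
The plan is to replay the argument behind Theorem~\ref{thm:approx_fp} (the dense-core bound), now with the Kronecker frame-potential identity \eqref{eq:kron_fp} replaced by the Hadamard identity \eqref{eq:kr_fp}. Concretely, I would first turn the $1/2$-guarantee for the surrogate $Q$ in Corollary~\ref{corollary:bound_Q} into an additive bound on $P$, and then absorb the additive term by means of a Welch-type (frame-theoretic) lower bound on $P(\mathcal{L}^\star)$.

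First I would unfold $Q(\mathcal{S})=P(\mathcal{N})-P(\mathcal{N}\setminus\mathcal{S})$ in Corollary~\ref{corollary:bound_Q}. With $\mathcal{L}_\text{greedy}=\mathcal{N}\setminus\mathcal{S}_\text{greedy}$ and $\mathcal{L}^\star=\mathcal{N}\setminus\mathcal{S}^\star$, the inequality $Q(\mathcal{S}_\text{greedy})\ge\tfrac12 Q(\mathcal{S}^\star)$ reads $P(\mathcal{N})-P(\mathcal{L}_\text{greedy})\ge\tfrac12\big(P(\mathcal{N})-P(\mathcal{L}^\star)\big)$, which rearranges to the additive bound $P(\mathcal{L}_\text{greedy})\le\tfrac12\big(P(\mathcal{N})+P(\mathcal{L}^\star)\big)$.

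Next I would lower-bound $P(\mathcal{L}^\star)$. Since $\matx{T}(\mathcal{L})=\matx{\Psi}^H(\mathcal{L})\matx{\Psi}(\mathcal{L})$ is an Hermitian positive-semidefinite $K_\text{c}\times K_\text{c}$ matrix, $P(\mathcal{L})=\FP{\matx{\Psi}(\mathcal{L})}=\tr{\matx{T}^H\matx{T}}=\sum_{k=1}^{K_\text{c}}\lambda_k^2$ with $\lambda_k\ge 0$ the eigenvalues of $\matx{T}(\mathcal{L})$. The Cauchy--Schwarz (power-mean) inequality then gives $\sum_k\lambda_k^2\ge\tfrac1{K_\text{c}}\big(\sum_k\lambda_k\big)^2=\tfrac1{K_\text{c}}\big(\tr{\matx{T}(\mathcal{L}^\star)}\big)^2$. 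Moreover $\tr{\matx{T}(\mathcal{L}^\star)}=\norm{\matx{\Psi}(\mathcal{L}^\star)}_F^2=\sum_{r\in\tilde{\mathcal{L}}^\star}\norm{\vec{u}_r}_2^2$, where $\vec{u}_r$ denotes the $r$th row of $\matx{U}_1\odot\dots\odot\matx{U}_R$; feasibility of $\mathcal{L}^\star$ forces $\tilde{L}^\star\ge 1$, and each selected row has squared norm at least $L_\text{min}>0$ (the diagonal-core analogue of the quantity defined in Theorem~\ref{thm:approx_fp}), so $\tr{\matx{T}(\mathcal{L}^\star)}\ge L_\text{min}$ and hence $P(\mathcal{L}^\star)\ge L_\text{min}^2/K_\text{c}\ge L_\text{min}^2/K$.

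Finally I would combine the two. Using $1\le (K/L_\text{min}^2)\,P(\mathcal{L}^\star)$ to bound $P(\mathcal{N})$ by $(K/L_\text{min}^2)\,P(\mathcal{N})\,P(\mathcal{L}^\star)$ in the additive bound,
\[
P(\mathcal{L}_\text{greedy})\le\tfrac12\big(P(\mathcal{N})+P(\mathcal{L}^\star)\big)\le\tfrac12\Big(\tfrac{K}{L_\text{min}^2}P(\mathcal{N})+1\Big)P(\mathcal{L}^\star)=\gamma\,P(\mathcal{L}^\star),
\]
recalling that $P(\mathcal{N})=\norm{\matx{T}_1(\mathcal{N}_1)\circ\dots\circ\matx{T}_R(\mathcal{N}_R)}_F^2$. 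The only genuine obstacle is the lower bound on $P(\mathcal{L}^\star)$: one must observe that every feasible $\mathcal{L}^\star$ yields a nonzero (indeed full-column-rank) $\matx{\Psi}(\mathcal{L}^\star)$ and push this through the eigenvalue/Welch-bound step; the remaining manipulations are the same bookkeeping as in the dense-core proof in \cite{frame_potential,thesis}.
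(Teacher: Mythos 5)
Your proposal is correct and follows essentially the same route the paper intends: it converts the $1/2$-guarantee of Corollary~\ref{corollary:bound_Q} into the additive bound $P(\mathcal{L}_\text{greedy})\le\tfrac12\left(P(\mathcal{N})+P(\mathcal{L}^\star)\right)$ and then absorbs $P(\mathcal{N})$ via the Cauchy--Schwarz lower bound $P(\mathcal{L}^\star)\ge L_\text{min}^2/K$, which is exactly the dense-core derivation of Theorem~\ref{thm:approx_fp} with the Khatri--Rao identity \eqref{eq:kr_fp} in place of \eqref{eq:kron_fp}. The only caveat, shared with the paper's own statement, is that the bound is vacuous if some selected row of $\matx{U}_1\odot\dots\odot\matx{U}_R$ has zero norm, so one implicitly assumes $L_\text{min}>0$.
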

\begin{proof}
Based on the proof of Theorem~\ref{thm:approx_fp}. The bound is obtained using \eqref{eq:kr_fp} instead of \eqref{eq:kron_fp} in the derivation.
\end{proof}

\subsubsection{Computational complexity}

The computational complexity of the greedy method is now governed by the complexity of computing the Grammian matrices $\matx{T}_i$. This can greatly be improved if before starting the iterations, one precomputes all the outer products in $\{\matx{T}_i\}_{i=1}^R$. Doing this has a computational complexity of $\mathcal{O}(N_\text{max}K_\text{c}^2)$. Then, in every iteration, the evaluation of $P(\mathcal{L})$ would only cost $\mathcal{O}(RK_\text{c}^2)$ operations. Further, because in every iteration we need to query $\mathcal{O}(N_i)$ elements on each domain, and we run the algorithm for $N-L$ iterations, the total time complexity of the iterations is $\mathcal{O}(RN_\text{max}^2K_\text{c}^2)$. This term dominates over the complexity of the precomputations, and thus can be treated as the worst case complexity of the greedy method.

\subsubsection{Practical considerations}

The proposed scheme suffers from the same issues as in the dense core case. Namely, it tends to empty the domains sequentially, thus producing solutions which are tight on the identifiability constraints. Nevertheless, as we indicated for the dense core, the drop in performance associated with the proximity of the solutions to the constraints can be reduced by giving some slack to the constraints.

 \begin{figure*}
   \centering
    \includegraphics[width=\textwidth]{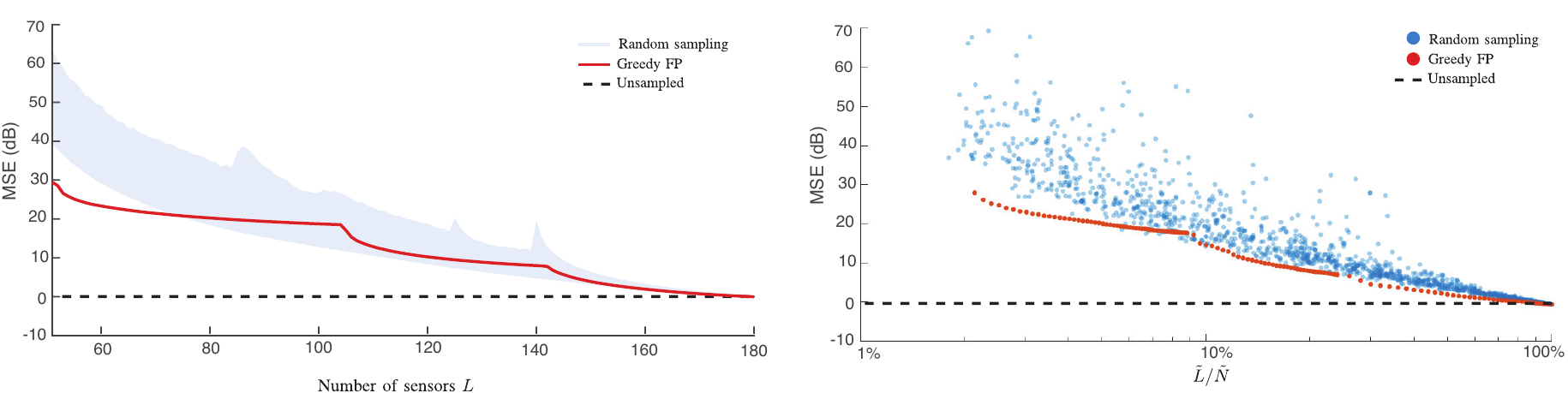}
   \caption{\footnotesize{Dense core with $R=3$, $N_1=50,N_2=60,N_3=70$, $K_1=10,K_2=20,K_3=15$, and $\alpha_1=\alpha_2=\alpha_3=2$.}}
   \label{fig:kron_comp}
      \vspace{-4mm}
 \end{figure*}
  \begin{figure*}
   \centering
        \includegraphics[width=\textwidth]{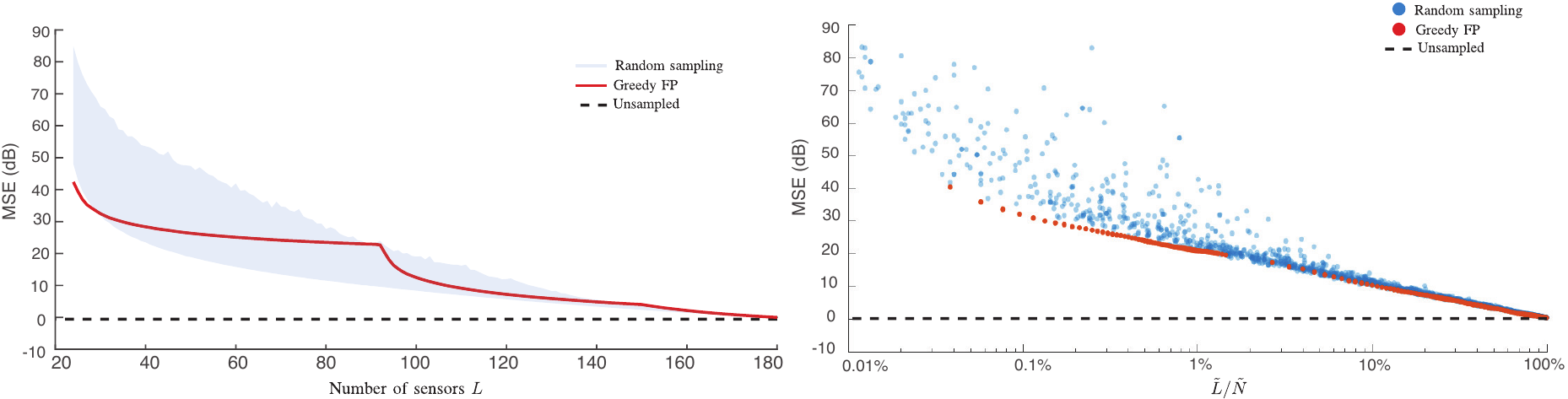}
   \caption{\footnotesize{Diagonal core with $R=3$ with $N_1=50,N_2=60,N_3=70$, $K_\text{c}=20$, $\beta_1=\beta_2=1$, and $\beta_3=20$.}}
   \label{fig:diag_comp}
   \vspace{-3mm}
 \end{figure*}

\section{Numerical results}\label{sec:numerical_results}

In this section\footnote{The code to reproduce these experiments can be found at https://gitlab.com/gortizji/sparse\_tensor\_sensing.}, we will illustrate the developed framework through several examples. First, we will show some results obtained on synthetic datasets to compare the performance of the different near-optimal algorithms. Then, we will focus on large-scale real-world examples related to (i) graph signal processing: \emph{sampling product graphs for active learning in recommender systems},  and (ii) array processing for wireless communications: \emph{multiuser source separation}, to show the benefits of the developed framework.

\subsection{Synthetic example}
\subsubsection{Dense core}

We compare the performance in terms of the theoretical MSE of our proposed greedy algorithm (henceforth referred to as greedy-FP) to a random sampling scheme based on randomly selecting rows of $\matx{U}_i$ such that the resulting subset of samples also have a Kronecker structure. Only those selections that satisfy the identifiability constraints in \eqref{eq:identifiability_constraint} are considered  valid. We note that the time complexity of evaluating $M$ times the MSE for a Kronecker-structured sampler is $\mathcal{O}(MN_\text{max}^2K_\text{max})$. For this reason, using many realizations (say, a large number $M$)  of random sampling to obtain a good sparse sampler is computationally intense.

To perform this comparison, we draw $M=100$ realizations of three random Gaussian matrices $\{\matx{U}_i\in\R^{N_i\times K_i}\}_{i=1}^{R=3}$ with dimensions $N_1=50$, $N_2=60$, and $N_3=70$. For each of these models, we solve $\eqref{eq:submodular_max_kron_matroid}$ for different number of sensors $L$ using greedy-FP. We also compute $M=100$ realizations of random sampling for each $L$. Fig.~\ref{fig:kron_comp} shows the results of these experiments. The plot on the left shows the performance averaged over the different models against the number of sensors, wherein the blue shaded area represents the 10-90 percentile average interval of the random sampling scheme. The performance values, in dB scale, are normalized by the value of the unsampled MSE. Because the estimation performance is heavily influenced by its related number of samples $\tilde{L}$, and noting the fact that a value of $L$ may lead to different $\tilde{L}$, we also present, in the plot on the right side of Fig.~\ref{fig:kron_comp}, the performance comparison for one model realization against the relative number of samples $\tilde{L}/\tilde{N}$ so that differences in the informative quality of the selections are highlighted.

% In light of these results, it is clear that the proposed methods outperform a naïve random selection of rows. As expected, the differences in performance get reduced when we increase the number of sensors. The greedy algorithm based on the frame potential ranks consistently on par with the best realizations of random sampling as shown on the right plot.

The plots in Fig.~\ref{fig:kron_comp} illustrate some important features of the proposed sparse sampling method. When comparing the performance against the number of sensors, we see that there are areas where greedy-FP performs as well as random sampling. However, when comparing the same results against the number of samples we see that greedy-FP consistently performs better than random sampling. The reason for this disparity is due to characteristics of greedy-FP that we introduced in Section~\ref{sec:dense_fp}. Namely, the tendency of greedy-FP to produce sampling sets with the minimum number of samples.

On the other hand, the performance curve of greedy-FP shows three bumps (recall that we use $R=3$). Again, this is a consequence of greedy-FP trying to meet the identifiability constraints in \eqref{eq:submodular_max_kron_matroid} with equality. As we increase $L$, the solutions of greedy-FP increase in cardinality by adding more elements to a single domain until the constraints are met, and then proceed to the next domain. The bumps in Fig.~\ref{fig:kron_comp} correspond precisely to these instances.

%, when the cardinality of one domain becomes maximum and the cardinality of another domain starts to increase.

\subsubsection{Diagonal core}

We perform the same experiment for the diagonal core case. The results are shown in Fig.~\ref{fig:diag_comp}. Again we see that the proposed algorithm outperforms random sampling, especially when collecting just a few samples. Furthermore, as happened in the dense core case, the performance curve of greedy-FP follows a stairway shape.

\subsection{Active learning for recommender systems}

Current recommendation algorithms seek solving an estimation problem of the form: given the past recorded preferences of a set of users, what is the \emph{rating} that these would give to a set of products? In this paper, in contrast, we focus on the data acquisition phase of the recommender system, which is also referred to as active learning/sampling. In particular, we claim that by carefully designing which users to poll and on which items, we can obtain an estimation performance on par with the state-of-the-art methods, but using only a fraction of the data that current methods require, and using a simple least-squares estimator.

We showcase this idea on the MovieLens $100k$ dataset \cite{movielens} that contains partial ratings of $N_1=943$ users over $N_2=1682$ movies which are stored in a second-order tensor $\mathbfcal{X}\in\R^{N_1\times N_2}$. At this point, we emphasize the need for our proposed framework, since it is obvious that designing an unstructured sampling set with about 1.5 million candidate locations is unfeasible with current computing resources.

%Graph signal processing \cite{gsp} has recently attracted a lot of attention due to its ability to translate classical signal processing techniques defined for Euclidean domains to signals defined on irregular domains with a network structure. In this sense, \cite{moura}

A model of $\mathbfcal{X}$ in the form of \eqref{eq:tensor_linear_model} can be obtained by viewing $\mathbfcal{X}$ as a signal that lives on a graph. In particular, the first two modes of $\mathbfcal{X}$ can be viewed as a signal defined on the Cartesian product of a user and movie graph, respectively. These two graphs, shown in Fig.~\ref{fig:graphs}, are provided in the dataset and are two 10-nearest-neighbors graphs created based on the user and movie features. 

Based on the recent advances in graph signal processing (GSP)~\cite{moura,gsp}, $\mathbfcal{X}$ can be decomposed as $\mathbfcal{X}=\mathbfcal{X}_\text{f}\bullet_1 \matx{V}_1\bullet_2\matx{V}_2$. Here, $\matx{V}_1\in\mathbb{R}^{N_1\times N_1}$ and $\matx{V}_2\in\mathbb{R}^{N_2\times N_2}$ are the eigenbases of the Laplacians of the user and movie graphs, respectively, and $\mathbfcal{X}_\text{f}\in\matx{C}^{N_1\times N_2}$ is the so-called graph spectrum of $\mathbfcal{X}$~\cite{moura,gsp}. Suppose the energy of the spectrum of $\mathbfcal{X}$ is concentrated in the first few $K_1$ and $K_2$ columns of $\matx{V}_1$ and $\matx{V}_2$, respectively, then $\mathbfcal{X}$ admits a low-dimensional representation, or $\mathbfcal{X}$ is said to be smooth or bandlimited with respect to the underlying graph~\cite{gsp}. This property has been exploited in~\cite{kalofolias2014matrix,marques} to impute the missing entries in $\mathbfcal{X}$. In contrast, we propose a scheme for sampling and reconstruction of signals defined on product graphs.

 In our experiments, we set $K_1=K_2=20$, and obtain the decomposition $\mathbfcal{X}=\mathbfcal{G}\bullet_1 \matx{U}_1\bullet_2\matx{U}_2$, where $\matx{U}_1\in\C^{N_1\times K_1}$ and $\matx{U}_2\in\C^{N_2\times K_1}$ consist of the first $K_1$ and $K_2$ columns of $\matx{V}_1$ and $\matx{V}_2$, respectively; and $\mathbfcal{G}=\mathbfcal{X}(1:K_1,1:K_2)$.

\begin{figure}
    \centering
    \begin{subfigure}[b]{0.49\columnwidth}
        \includegraphics[width=\textwidth]{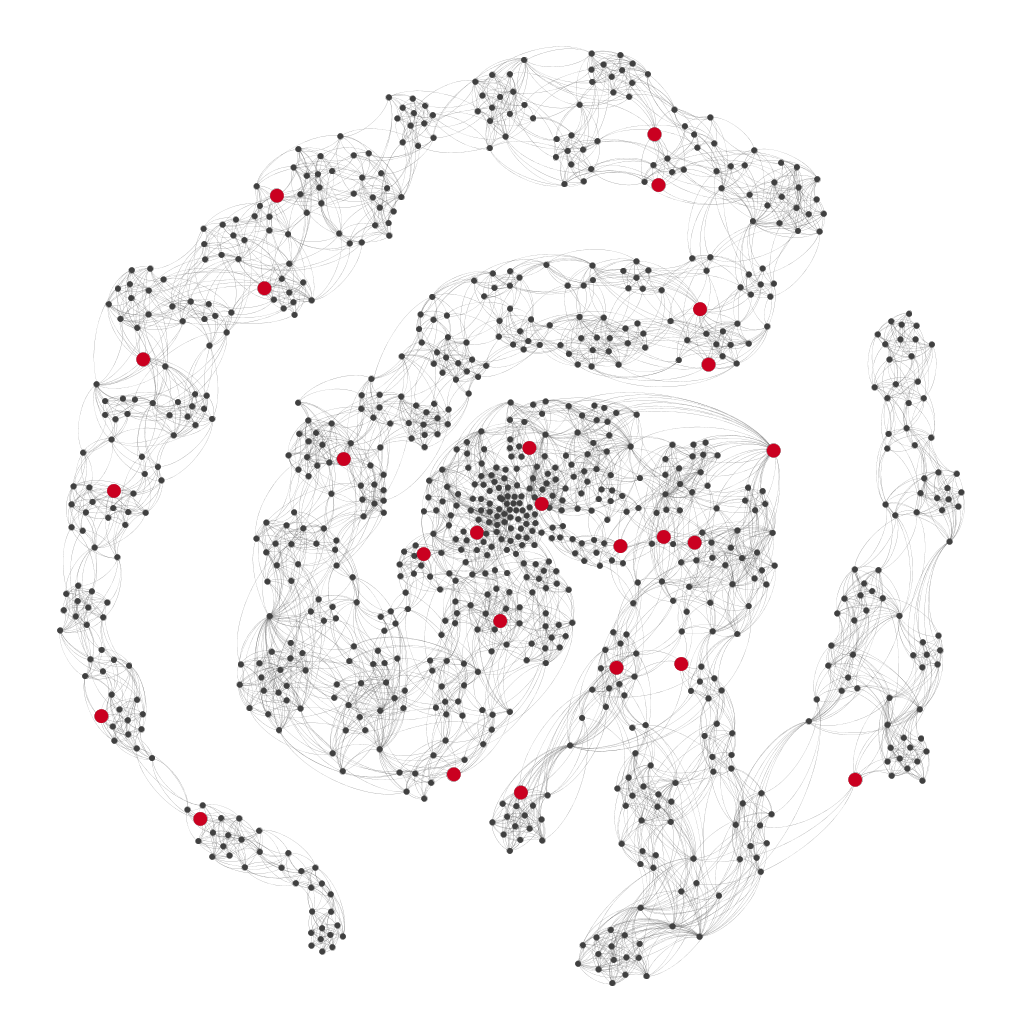}
        \caption{User graph}
        \label{fig:users}
    \end{subfigure}
    \hfill
    \begin{subfigure}[b]{0.49\columnwidth}
        \includegraphics[width=\textwidth]{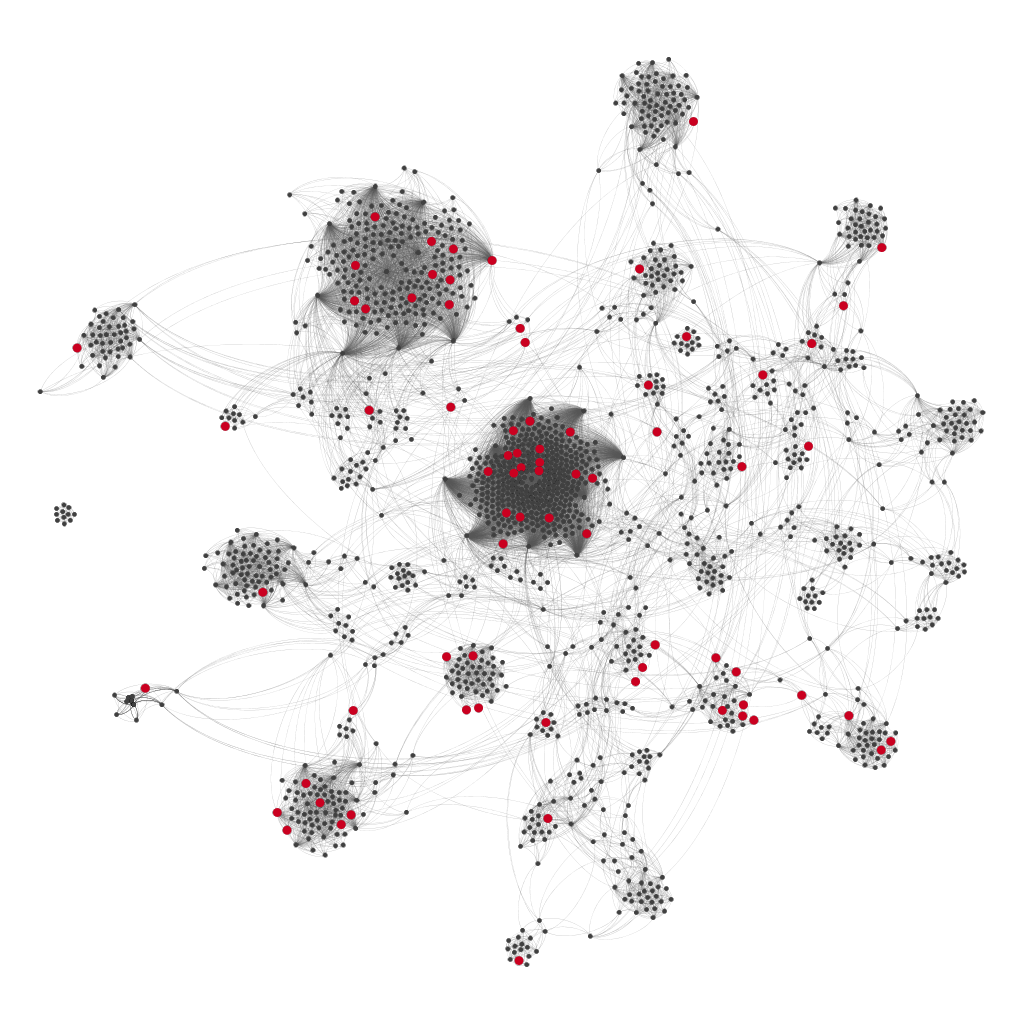}
        \caption{Movie graph}
        \label{fig:movies}
    \end{subfigure}
    \caption{\footnotesize{User and movie networks. The red (black) dots represent the observed (unobserved) vertices. Visualization obtained using Gephi \cite{gephi}.}}
    \label{fig:graphs}
    \vspace{-2mm}
\end{figure}

For the greedy algorithm we use $L = 100$ and $\alpha_1=\alpha_2=5$, resulting in a selection of $L_1 = 25$ users and $L_2 = 75$ movies, i.e., a total of $1875$ vertices in the product graph. Fig.~\ref{fig:graphs}, shows the sampled users and movies, i.e., users to be probed for movie ratings. The user graph [cf. Fig.~\ref{fig:users}] is made out of small clusters connected in a chain-like structure, resulting in a uniformly spread distribution of observed vertices. On the other hand, the movies graph [cf. Fig.~\ref{fig:movies}] is made out of a few big and small clusters. Hence, the proposed active querying scheme assigns more observations to the bigger clusters and fewer observations to the smaller ones.

To evaluate the performance of our algorithm, we compute the RMSE of the estimated data using the test mask provided by the dataset. Nevertheless, since our active query method requires access to ground truth data (i.e., we need access to the samples at locations suggested by the greedy algorithm) which is not provided in the dataset, we use GRALS \cite{grals} to complete the matrix, and use its estimates when required. A comparison of our algorithm to the performance of the state-of-the-art methods run on the same dataset is shown in Table~\ref{tab:performance}. In light of these results, it is clear that a proper design of the sampling set allows to obtain top performance with significantly fewer ratings, i.e., about an order of magnitude, and using a much simpler non-iterative estimator.

\begin{table}
\centering{}
\begin{tabular}{@{}ccc@{}}
\toprule
\textbf{Method} & \textbf{Number of samples} & \textbf{RMSE} \\ \midrule
GMC \cite{kalofolias2014matrix}            & 80,000                      & 0.996         \\ \midrule
GRALS  \cite{grals}         & 80,000                      & 0.945         \\ \midrule
sRGCNN\cite{monti2017geometric}         & 80,000                      & 0.929         \\ \midrule
GC-MC    \cite{berg2017graph}       & 80,000                      & \textbf{0.905}         \\ \midrule\midrule
Our method      & \textbf{1,875}                       & 0.9347        \\ \bottomrule
\end{tabular}
\caption{\footnotesize{Performance on MovieLens $100k$. Baseline scores are taken from \cite{berg2017graph}.}}
\label{tab:performance}
%\vspace{-5mm}
\end{table}

\subsection{Multiuser source separation}

In multiple-input multiple-output (MIMO) communications \cite{mimo_magazine}, the use of rectangular arrays \cite{ura} allows to separate signals coming from different azimuth and elevation angles, and it is common that users transmit data using different spreading codes to reduce the interference from other sources. Reducing hardware complexity by minimizing the number of antennas and samples to be processed is an important concern in the design of MIMO receivers. This design can be seen as a particular instance of sparse tensor sampling. %In this section, we therefore analyze the performance of our framework in solving this problem.

We consider a scenario with $K_\text{c}$ users located at different angles of azimuth ($\phi$) and elevation ($\theta$) transmitting using unique spreading sequences of length $N_3$. The receiver consists of a uniform rectangular array (URA) with antennas located on a $N_1\times N_2$ grid. Each time instant, every antenna receives \cite{ura}
\begin{align*}
  x(r,l,m,n) &= \sum_{k=1}^{K_\text{c}} s_k(r) c_k(l) e^{j2\pi n \Delta_x \sin \theta_k}e^{j 2\pi m \Delta_y \sin \phi_k} \\
  &\hspace{1em}+ w(r,l,m,n),
\end{align*}
where $s_k(r)$ the symbol transmitted by user $k$ in the $r$th symbol period; $c_k(l)$ the $l$th sample of the spreading sequence of the $k$th user; $\Delta_x$ and $\Delta_y$ the antenna separations in wavelengths of the URA in the $x$ and $y$ dimensions, respectively; and $\phi_{k}$ and $\theta_k$ the azimuth and elevation coordinates of user $k$, respectively; and where $w(r,l,m,n)$ represents an additive white Gaussian noise term with zero mean and variance $\sigma^2$. For the $r$-th symbol period, all these signals can be collected in a 3rd-order tensor $\mathbfcal{X}(r)\in\C^{N_1\times N_2\times N_3}$ that can be decomposed as
\begin{equation*}
  \mathbfcal{X}(r)=\mathbfcal{S}(r)\bullet_1 \matx{U}_1\bullet_2 \matx{U}_2 \bullet_3 \matx{U}_3+\mathbfcal{W}(r)
\end{equation*}
where $\matx{U}_1\in\C^{N_1\times K_\text{c}}$ and $\matx{U}_2\in\C^{N_2\times K_\text{c}}$ are the array responses for the $x$ and $y$ directions, respectively; $\matx{U}_3\in\C^{N_3\times K_\text{c}}$ contains the spreading sequences of all users in its columns; and $\mathbfcal{S}(r)\in\C^{K_\text{c}\times K_\text{c}\times K_\text{c}}$ is a diagonal tensor that stores the symbols of all users for the $r$th symbol period on its diagonal. 

\begin{figure}[t]
  \centering
  \includegraphics[width=\columnwidth]{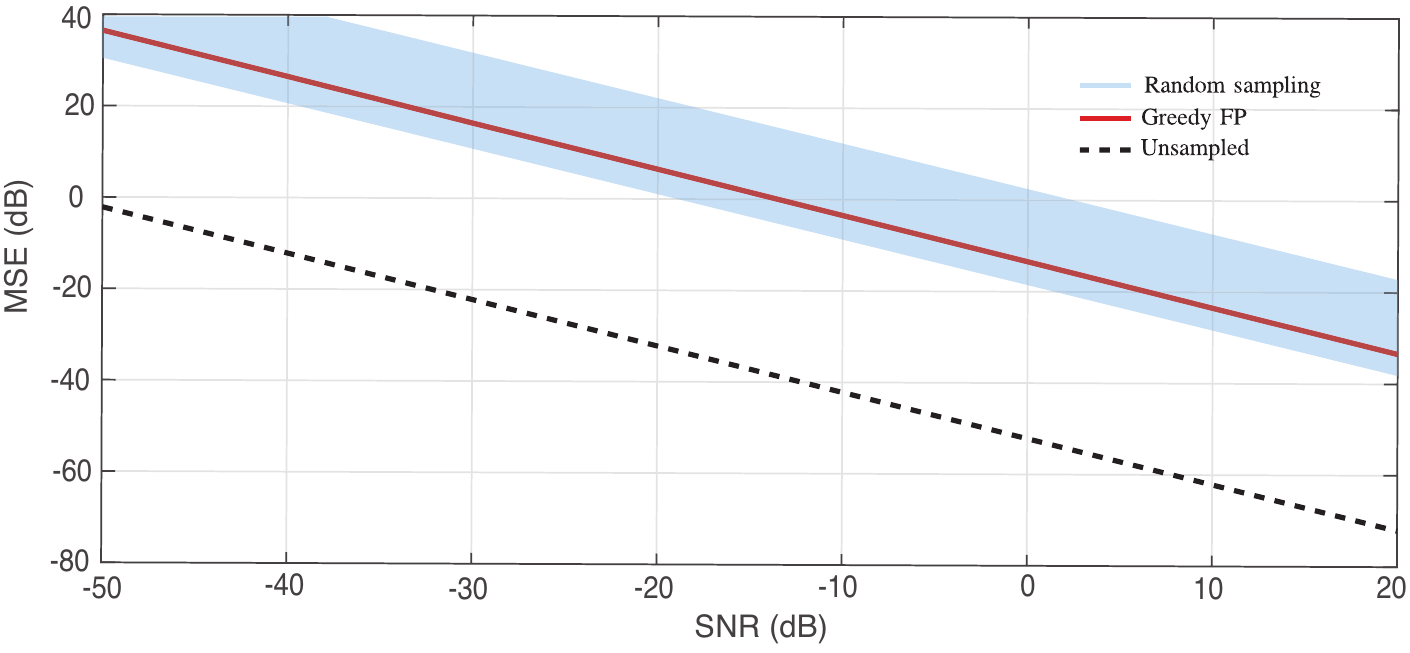}
  \caption{\footnotesize{MSE of symbol reconstruction. $N_1=50$, $N_2=60$, $N_3=100$, and $L=15$.}}
  \label{fig:error_prob}
  \vspace{-5mm}
\end{figure}

We simulate this setup using $K_\text{c}=10$ users that transmit BPSK symbols with different random powers and that are equispaced in azimuth and elevation. We use a rectangular array with $N_1=50$ and $N_2=60$ for the ground set locations of the antennas, and binary random spreading sequences of length $N_3=100$. With these parameters, each $\mathbfcal{X}(r)$ has $300,000$ entries. We generate many realizations of these signals for different levels of signal-to-noise ratio (SNR) and sample the resulting tensors using the greedy algorithm for the diagonal core case with $L=15$, resulting in a relative number of samples of $0.048\%$. The results are depicted in Fig.~\ref{fig:error_prob}, where the blue shaded area represents the MSE obtained with the best and worst random samplers. As expected, the MSE of the reconstruction decreases exponentially with the SNR. For a given MSE, achieving maximum compression requires transmitting with a higher SNR of about 30dB than the one needed for no compression. Besides, we see that our proposed greedy algorithm consistently performs as well as the best random sampling scheme.

\section{Conclusions}\label{sec:conclusions}

In this paper, we presented the design of sparse samplers for inverse problems with tensors. We have seen that by using samplers with a Kronecker structure we can overcome the curse of dimensionality, and design efficient subsampling schemes that guarantee a good performance for the reconstruction of multidomain tensor signals. We presented sparse sampling design methods for cases in which the multidomain signals can be decomposed using a multilinear model with a dense core or a diagonal core. For both cases, we have provided a near-optimal greedy algorithm based on submodular optimization methods to compute the sampling sets.  

\appendix

\subsection{Proof of Theorem~\ref{thm:submodular_G}}\label{ap:proof_G}

In order to simplify the derivations, let us introduce the notation
$
  \bar{F}_i(\mathcal{S}_i)=F_i(\mathcal{N}_i\setminus\mathcal{S}_i),
$
so that $G(\mathcal{S})$ can also be written
\begin{equation}
  G(\mathcal{S})\coloneqq\prod_{i=1}^R F_i(\mathcal{N}_i)-\prod_{i=1}^R\bar{F}_i(\mathcal{S}_i) \label{eq:def_fp_kron}.
\end{equation}

From \eqref{eq:def_fp_kron} it is evident that $G(\varnothing)=0$. Thus, proving that $G$ is normalized. To prove monotonicity, recall that the single domain frame potential terms $F_i(\mathcal{L}_i)$ are all non-negative, monotone non-decreasing functions for all $\mathcal{L}_i\subseteq\mathcal{N}_i$ \cite{frame_potential}. Therefore, $\bar{F}_i(\mathcal{S}_i)=F_i(\mathcal{N}\setminus\mathcal{S}_i)$ will be non-negative, but monotone non-increasing. Let $\mathcal{S}\subseteq\mathcal{N}$ and $x\in\mathcal{N}\setminus\mathcal{S}$. Without loss of generality, let us assume $x\in\mathcal{N}_i$. Then, we have
\begin{align*}
  &G(\mathcal{S}\cup\{x\})=\prod_{i=1}^RF_i(\mathcal{N}_i)-\bar{F}_i(\mathcal{S}_i\cup\{x\})\prod_{j\neq i} \bar{F}_j(\mathcal{S}_j),\\
  &G(\mathcal{S})=\prod_{i=1}^RF_i(\mathcal{N}_i)-\bar{F}_i(\mathcal{S}_i)\prod_{j\neq i} \bar{F}_j(\mathcal{S}_j).
\end{align*}
Now, since $\bar{F}_i(\mathcal{S}_i)\geq\bar{F}_i(\mathcal{S}_i\cup\{x\})$, we know that
$
  G(\mathcal{S}\cup\{x\})\geq G(\mathcal{S}).
$
Hence, $G(\mathcal{S})$ is monotone non-decreasing.

To prove submodularity, recall that every $F_i(\mathcal{L}_i)$ is supermodular~\cite{frame_potential}. As taking the complement preserves (super)submodularity, $\bar{F}_i(\mathcal{L}_i)=F_i(\mathcal{N}_i\setminus\mathcal{L}_i)$ is also supermodular. Let $\mathcal{S}=\bigcup_{i=1}^R\mathcal{A}_i$, with $\mathcal{A}_i\subseteq\mathcal{N}_i$ for $i=1,\dots,R$, such that $\{\mathcal{A}_i\}_{i=1}^R$, forms a partition of $\mathcal{S}$. Now, recall from Definition~\ref{def:submodular} that for $G$ to be submodular we require that $\forall x,y\in\mathcal{N}\setminus \mathcal{S}$
\begin{equation}
  G(\mathcal{S}\cup\{x\})-G(\mathcal{S})\geq G(\mathcal{S}\cup\{x,y\})-G(\mathcal{S}\cup\{y\}).\label{eq:submodular_def_rep}
\end{equation}
As the ground set is now partitioned into the union of several ground sets, there are two possible ways the elements $x$ and $y$ can be selected. Either they both belong to the same domain, or they belong to different domains. We next prove that \eqref{eq:submodular_def_rep} is satisfied for the aforementioned both cases.

Suppose $x,y\in\mathcal{N}_i$, then \eqref{eq:submodular_def_rep} can be developed as
  \begin{align*}
    \bar{F}_i&(\mathcal{A}_i)\prod_{j\neq i}\bar{F}_j(\mathcal{A}_j)-\bar{F}_i(\mathcal{A}_i\cup\{x\})\prod_{j\neq i}\bar{F}_j(\mathcal{A}_j)\\
    &\geq \bar{F}_i(\mathcal{A}_i\cup\{y\})\prod_{j\neq i}\bar{F}_j(\mathcal{A}_j)- \bar{F}_i(\mathcal{A}_i\cup\{i,j\})\prod_{j\neq i}\bar{F}_j(\mathcal{A}_j),
  \end{align*}
which can be further simplified to
\begin{align*}
    \bar{F}_i(\mathcal{A}_i\cup\{x\})-\bar{F}_i(\mathcal{A}_i)
    \leq \bar{F}_i(\mathcal{A}_i\cup\{x,y\})-\bar{F}_i(\mathcal{A}_i\cup\{y\}).
\end{align*}
The above inequality is true since $\bar{F}_i$ is supermodular.

Next, suppose $x\in\mathcal{N}_i$ and $y\in\mathcal{N}_j$ with $i\neq j$, then \eqref{eq:submodular_def_rep} can be expanded as
  \begin{align*}
    \prod_{k\neq i, j}&\bar{F}_k(\mathcal{A}_k)\left[\bar{F}_i(\mathcal{A}_i)\bar{F}_j(\mathcal{A}_j)-\bar{F}_i(\mathcal{A}_i\cup\{x\})\bar{F}_j(\mathcal{A}_j)\right]\\
    &\geq \prod_{k\neq i, j}\bar{F}_k(\mathcal{A}_k)\left[\bar{F}_i(\mathcal{A}_i)\bar{F}_j(\mathcal{A}_j\cup\{y\})\right.
    \\&\quad\left.- \bar{F}_i(\mathcal{A}_i\cup\{x\})\bar{F}_j(\mathcal{A}_j\cup\{y\})\right].
  \end{align*}
  Extracting the common factors
  \begin{equation}
    \left[\bar{F}_i(\mathcal{A}_i)-\bar{F}_i(\mathcal{A}_i\cup\{x\})\right]\left[\bar{F}_j(\mathcal{A}_j)-\bar{F}_j(\mathcal{A}_j\cup\{y\})\right]\geq 0.\label{eq:cond_kron_3}
  \end{equation}
  Since $\bar{F}_i$ and $\bar{F}_j$ are non-increasing 
  \begin{gather*}
    \bar{F}_i(\mathcal{A}_i)-\bar{F}_i(\mathcal{A}_i\cup\{x\})\geq 0; \quad 
    \bar{F}_j(\mathcal{A}_j)-\bar{F}_j(\mathcal{A}\cup\{y\})\geq 0.
  \end{gather*}
  Thus, \eqref{eq:cond_kron_3} is always satisfied, thus proving that \eqref{eq:submodular_def_rep} is satisfied for any $\mathcal{S}\subseteq\mathcal{N}$ and $x,y\in\mathcal{N}\setminus\mathcal{S}$ and therefore $G$ is submodular.

\subsection{Proof of Theorem~\ref{thm:submodular_Q}}\label{ap:proof_Q}

We  divide the proof in two parts. First, we derive some properties of the involved operations that are useful to simplify the proof. Then, we  use this to derive the proof.

\subsubsection{Preliminaries}
First, note that the single-domain Grammian matrices satisfy the following lemma.
\begin{lemma}[Grammian of disjoint union]\label{lemma:union_T}
  Let $\mathcal{X},\mathcal{Y}\subseteq\mathcal{N}_i$ with $\mathcal{X}\cap\mathcal{Y}=\varnothing$. Then, the Grammian of $\mathcal{X}\cup\mathcal{Y}$ satisfies
  \begin{equation*}
    \matx{T}_i(\mathcal{X}\cup\mathcal{Y})=\matx{T}_i(\mathcal{X})+\matx{T}_i(\mathcal{Y}).
  \end{equation*}
\end{lemma}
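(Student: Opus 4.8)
The plan is to prove Lemma~\ref{lemma:union_T} directly by unpacking the definition of the single-domain Grammian $\matx{T}_i(\mathcal{X}) = \matx{\Psi}_i^H(\mathcal{X})\matx{\Psi}_i(\mathcal{X})$, where $\matx{\Psi}_i(\mathcal{X}) = \matx{\Phi}_i(\mathcal{X})\matx{U}_i$ selects the rows of $\matx{U}_i$ indexed by $\mathcal{X}$. The key observation is that a Gram matrix built from a selection of rows is simply the sum of the rank-one outer products of those rows: if we denote by $\vec{u}_{i,\ell}^H$ the $\ell$th row of $\matx{U}_i$, then $\matx{T}_i(\mathcal{X}) = \matx{U}_i^H \matx{\Phi}_i^H(\mathcal{X})\matx{\Phi}_i(\mathcal{X}) \matx{U}_i = \sum_{\ell\in\mathcal{X}} \vec{u}_{i,\ell}\vec{u}_{i,\ell}^H$, since $\matx{\Phi}_i^H(\mathcal{X})\matx{\Phi}_i(\mathcal{X})$ is the diagonal $0/1$ matrix with ones exactly at the positions in $\mathcal{X}$.

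From there the proof is essentially one line: since $\mathcal{X}\cap\mathcal{Y}=\varnothing$, the index set $\mathcal{X}\cup\mathcal{Y}$ decomposes the sum, so
\begin{equation*}
  \matx{T}_i(\mathcal{X}\cup\mathcal{Y}) = \sum_{\ell\in\mathcal{X}\cup\mathcal{Y}} \vec{u}_{i,\ell}\vec{u}_{i,\ell}^H = \sum_{\ell\in\mathcal{X}} \vec{u}_{i,\ell}\vec{u}_{i,\ell}^H + \sum_{\ell\in\mathcal{Y}} \vec{u}_{i,\ell}\vec{u}_{i,\ell}^H = \matx{T}_i(\mathcal{X}) + \matx{T}_i(\mathcal{Y}).
\end{equation*}
I would state the row-sum representation explicitly as the first step (perhaps noting it follows from $\matx{\Phi}_i^H\matx{\Phi}_i = \diag{\vec{1}_{\mathcal{X}}}$), and then invoke disjointness to split the sum as the second and final step.

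There is really no substantial obstacle here; the only thing requiring a moment's care is making the bookkeeping about $\matx{\Phi}_i(\mathcal{X})$ precise — i.e., confirming that $\matx{\Phi}_i^H(\mathcal{X})\matx{\Phi}_i(\mathcal{X})$ is the diagonal indicator matrix of $\mathcal{X}$ (the rows of $\matx{\Phi}_i$ are distinct standard basis vectors, so the product telescopes to a diagonal projector) rather than $\matx{\Phi}_i(\mathcal{X})\matx{\Phi}_i^H(\mathcal{X})$, which is the identity of size $|\mathcal{X}|$. Once that is pinned down, disjointness of $\mathcal{X}$ and $\mathcal{Y}$ does all the remaining work, and the lemma will be used in the main proof of Theorem~\ref{thm:submodular_Q} to express the effect of adding an element to a selection set as an additive rank-one update of the corresponding Grammian.
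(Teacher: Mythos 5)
Your proof is correct and follows essentially the same route as the paper: both write $\matx{T}_i(\mathcal{X})$ as a sum over $\mathcal{X}$ of rank-one contributions from the selected rows of $\matx{U}_i$ and then split the sum using disjointness. If anything, your version is the more carefully stated one, since the paper's proof writes the rank-one terms as $\norm{\vec{u}_{i,j}}_2^2$ (a scalar) where the outer products $\vec{u}_{i,j}\vec{u}_{i,j}^H$ are clearly intended.
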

\begin{proof}
  Let $\vec{u}_{i,j}$ denote the $j$th row of $\matx{T}_i$. Then,
  \begin{equation*}
    \matx{T}_i(\mathcal{X}\cup\mathcal{Y})=\sum_{j\in\mathcal{X}\cup\mathcal{Y}}\norm{\vec{u}_{i,j}}^2_2=\sum_{j\in\mathcal{X}}\norm{\vec{u}_{i,j}}^2_2+\sum_{j\in\mathcal{Y}}\norm{\vec{u}_{i,j}}^2_2.
  \end{equation*}
\end{proof}

Let us introduce the complement Grammian matrix
\begin{equation}
  \bar{\matx{T}}_i(\mathcal{S}_i)\coloneqq\matx{T}_i(\mathcal{N}_i\setminus\mathcal{S}_i)=\matx{T}_i(\mathcal{N}_i)-\matx{T}_i(\mathcal{S}_i),\label{eq:tilde_t}
\end{equation}
which satisfies the following lemma.
\begin{lemma}[Complement Grammian of disjoint union]\label{lemma:union_T_bar}
  Let $\mathcal{X},\mathcal{Y}\subseteq\mathcal{N}_i$ with $\mathcal{X}\cap\mathcal{Y}=\varnothing$. Then,
$
    \bar{\matx{T}}_i(\mathcal{X}\cup\mathcal{Y})=\bar{\matx{T}}_i(\mathcal{X})-\matx{T}_i(\mathcal{Y}).
 $
\end{lemma}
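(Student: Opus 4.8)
This lemma is an immediate algebraic consequence of the definition of the complement Grammian in \eqref{eq:tilde_t} together with the additivity of single-domain Grammians over disjoint unions established in Lemma~\ref{lemma:union_T}. The plan is simply to expand $\bar{\matx{T}}_i(\mathcal{X}\cup\mathcal{Y})$ using its definition, apply Lemma~\ref{lemma:union_T} to split $\matx{T}_i(\mathcal{X}\cup\mathcal{Y})$ into $\matx{T}_i(\mathcal{X})+\matx{T}_i(\mathcal{Y})$ (legitimate precisely because $\mathcal{X}\cap\mathcal{Y}=\varnothing$), and then regroup the terms so that $\matx{T}_i(\mathcal{N}_i)-\matx{T}_i(\mathcal{X})$ is recognized as $\bar{\matx{T}}_i(\mathcal{X})$.

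\textbf{Key steps.} First I would write, using \eqref{eq:tilde_t},
\begin{equation*}
  \bar{\matx{T}}_i(\mathcal{X}\cup\mathcal{Y})=\matx{T}_i(\mathcal{N}_i)-\matx{T}_i(\mathcal{X}\cup\mathcal{Y}).
\end{equation*}
Second, since $\mathcal{X}$ and $\mathcal{Y}$ are disjoint subsets of $\mathcal{N}_i$, Lemma~\ref{lemma:union_T} gives $\matx{T}_i(\mathcal{X}\cup\mathcal{Y})=\matx{T}_i(\mathcal{X})+\matx{T}_i(\mathcal{Y})$, so that
\begin{equation*}
  \bar{\matx{T}}_i(\mathcal{X}\cup\mathcal{Y})=\bigl(\matx{T}_i(\mathcal{N}_i)-\matx{T}_i(\mathcal{X})\bigr)-\matx{T}_i(\mathcal{Y})=\bar{\matx{T}}_i(\mathcal{X})-\matx{T}_i(\mathcal{Y}),
\end{equation*}
where the last equality is again \eqref{eq:tilde_t}. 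That completes the proof. There is essentially no obstacle here; the only point requiring any care is to verify the hypothesis of Lemma~\ref{lemma:union_T}, namely the disjointness of $\mathcal{X}$ and $\mathcal{Y}$, which is assumed in the statement, so the chain of equalities goes through verbatim.
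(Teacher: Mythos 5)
Your proof is correct and is essentially identical to the paper's: both expand $\bar{\matx{T}}_i(\mathcal{X}\cup\mathcal{Y})$ via the definition in \eqref{eq:tilde_t}, invoke Lemma~\ref{lemma:union_T} on the disjoint sets to split $\matx{T}_i(\mathcal{X}\cup\mathcal{Y})$, and regroup to recognize $\bar{\matx{T}}_i(\mathcal{X})$. No differences worth noting.
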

\begin{proof}
    From \eqref{eq:tilde_t} and Lemma~\ref{lemma:union_T}, we have
    \begin{align*}
      \bar{\matx{T}}_i(\mathcal{X}\cup\mathcal{Y})&=\matx{T}_i(\mathcal{N}_i)-\left[\matx{T}_i(\mathcal{X})+\matx{T}_i(\mathcal{Y})\right]=\bar{\matx{T}}_i(\mathcal{X})-\matx{T}_i(\mathcal{Y}).
    \end{align*}
\end{proof}

Now, let us introduce an operator to compress the writing of the multidomain Hadamard product
\[
  \mathbb{T}(\mathcal{L})\coloneqq\matx{T}_1(\mathcal{L}_1)\circ\dots\circ\matx{T}_R(\mathcal{L}_R),
\]
or alternatively for the complement Grammian
\[
  \bar{\mathbb{T}}(\mathcal{S})\coloneqq\bar{\matx{T}}_1(\mathcal{S}_1)\circ\dots\circ\bar{\matx{T}}_R(\mathcal{S}_R).
\]
Furthermore, we  write the Hadamard multiplication of all $\matx{T}_i$ with $i=1,\dots,R$, but $j$ as
\begin{equation*}
  \mathbb{T}_{-j}(\mathcal{L})\coloneqq\mathbb{T}(\mathcal{L})\circ\matx{T}_j(\mathcal{L}_j)^{\circ -1},
\end{equation*}
where $\matx{A}^{\circ n}$ denotes the element-wise $n$th power of $\matx{A}$. Similarly, for the complement Grammians, we will use $\bar{\mathbb{T}}_{-i}(\mathcal{S})$. We also make use of the following properties of the Hadamard product.

\begin{property}
  The Hadamard product of two positive semidefinite matrices is always positive semidefinite.
\end{property}
\begin{property}\label{prop:frob}
  Let $\matx{A},\matx{B}\in\C^{N\times N}$. Then, 
  \begin{equation*}
      \norm{\matx{A}\circ\matx{B}}^2_F=\tr{\matx{A}^{\circ2}\left(\matx{B}^{\circ2}\right)^T}=\braket{\matx{A}^{\circ2},\matx{B}^{\circ2}}.
  \end{equation*}
\end{property}
Let us introduce the notation
\begin{equation}
  \matx{H}_i(\mathcal{S})\coloneqq\matx{T}_i^{\circ 2}(\mathcal{S})\qquad\text{and}\qquad\bar{\matx{H}}_i(\mathcal{S})\coloneqq\bar{\matx{T}}_i^{\circ 2}(\mathcal{S}),
\end{equation}
which satisfies the following lemma.
\begin{lemma}\label{lemma:union_H}
  Let $\mathcal{X},\mathcal{Y}\subseteq\mathcal{N}_i$ with $\mathcal{X}\cap\mathcal{Y}=\varnothing$. Then,
  \begin{align*}
    \matx{H}_i(\mathcal{X}\cup\mathcal{Y})&=\matx{T}_i^{\circ 2}(\mathcal{X}\cup\mathcal{Y})=\left(\matx{T}_i(\mathcal{X})+\matx{T}_i(\mathcal{Y})\right)^{\circ 2}\\
    &=\matx{H}_i(\mathcal{X})+\matx{H}_i(\mathcal{Y})+2\matx{T}_i(\mathcal{X})\circ \matx{T}_i(\mathcal{Y}).
  \end{align*}
and
  \begin{align*}
    \bar{\matx{H}}_i(\mathcal{X}\cup\mathcal{Y})&=\bar{\matx{T}}_i^{\circ 2}(\mathcal{X}\cup\mathcal{Y})=\left(\bar{\matx{T}}_i(\mathcal{X})-\matx{T}_i(\mathcal{Y})\right)^{\circ 2}\\
    &=\bar{\matx{H}}_i(\mathcal{X})+\matx{H}_i(\mathcal{Y})-2\bar{\matx{T}}_i(\mathcal{X})\circ \matx{T}_i(\mathcal{Y}).
  \end{align*}
\end{lemma}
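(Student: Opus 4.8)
The plan is to obtain both identities as an immediate consequence of the two preceding decomposition lemmas together with the entrywise binomial formula, so that the proof is essentially a one-line computation in each case.

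First I would record the elementary observation that for any $\matx{A},\matx{B}\in\C^{N\times N}$ one has $(\matx{A}\pm\matx{B})^{\circ 2}=\matx{A}^{\circ 2}+\matx{B}^{\circ 2}\pm 2\,\matx{A}\circ\matx{B}$. This holds because the Hadamard square, the matrix sum, and the Hadamard product all act coordinatewise, so the $(k,l)$ entry of the left-hand side is $(A_{kl}\pm B_{kl})^2$, and the scalar identity $(a\pm b)^2=a^2+b^2\pm 2ab$ applies entry by entry.

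For the first identity, disjointness of $\mathcal{X}$ and $\mathcal{Y}$ lets me invoke Lemma~\ref{lemma:union_T} to write $\matx{T}_i(\mathcal{X}\cup\mathcal{Y})=\matx{T}_i(\mathcal{X})+\matx{T}_i(\mathcal{Y})$; squaring elementwise and using the ``$+$'' case of the formula with $\matx{A}=\matx{T}_i(\mathcal{X})$ and $\matx{B}=\matx{T}_i(\mathcal{Y})$ gives $\matx{H}_i(\mathcal{X}\cup\mathcal{Y})=\matx{T}_i^{\circ 2}(\mathcal{X})+\matx{T}_i^{\circ 2}(\mathcal{Y})+2\,\matx{T}_i(\mathcal{X})\circ\matx{T}_i(\mathcal{Y})$, which is the stated expression after substituting $\matx{H}_i(\mathcal{X})=\matx{T}_i^{\circ 2}(\mathcal{X})$ and likewise for $\mathcal{Y}$. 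For the complement version I would instead use Lemma~\ref{lemma:union_T_bar}, namely $\bar{\matx{T}}_i(\mathcal{X}\cup\mathcal{Y})=\bar{\matx{T}}_i(\mathcal{X})-\matx{T}_i(\mathcal{Y})$ (again using $\mathcal{X}\cap\mathcal{Y}=\varnothing$), and apply the ``$-$'' case with $\matx{A}=\bar{\matx{T}}_i(\mathcal{X})$ and $\matx{B}=\matx{T}_i(\mathcal{Y})$, obtaining $\bar{\matx{H}}_i(\mathcal{X}\cup\mathcal{Y})=\bar{\matx{T}}_i^{\circ 2}(\mathcal{X})+\matx{T}_i^{\circ 2}(\mathcal{Y})-2\,\bar{\matx{T}}_i(\mathcal{X})\circ\matx{T}_i(\mathcal{Y})=\bar{\matx{H}}_i(\mathcal{X})+\matx{H}_i(\mathcal{Y})-2\,\bar{\matx{T}}_i(\mathcal{X})\circ\matx{T}_i(\mathcal{Y})$.

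There is no genuine obstacle here: the content of the lemma is just the two Grammian-decomposition lemmas combined with the entrywise binomial theorem. The only points to watch are the sign flip in the complement case (inherited directly from Lemma~\ref{lemma:union_T_bar}) and the fact that in the complement identity only the $\mathcal{X}$-term carries the barred Grammian, while the $\mathcal{Y}$-term is $\matx{H}_i(\mathcal{Y})=\matx{T}_i^{\circ 2}(\mathcal{Y})$ and not $\bar{\matx{H}}_i(\mathcal{Y})$.
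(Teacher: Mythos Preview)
Your proposal is correct and matches the paper's own treatment: the paper states the lemma with the intermediate equalities $\matx{T}_i^{\circ 2}(\mathcal{X}\cup\mathcal{Y})=(\matx{T}_i(\mathcal{X})+\matx{T}_i(\mathcal{Y}))^{\circ 2}$ and $\bar{\matx{T}}_i^{\circ 2}(\mathcal{X}\cup\mathcal{Y})=(\bar{\matx{T}}_i(\mathcal{X})-\matx{T}_i(\mathcal{Y}))^{\circ 2}$ already built into the display, so the proof is exactly the combination of Lemmas~\ref{lemma:union_T} and~\ref{lemma:union_T_bar} with the entrywise binomial expansion, precisely as you wrote. There is nothing to add.
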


Moreover, as we did with the Grammian matrices, we introduce the notation
\begin{equation*}
  \mathbb{H}(\mathcal{L})\coloneqq \matx{H}_1(\mathcal{L}_1)\circ\dots\circ\matx{H}_R(\mathcal{L}_R),
\end{equation*}
and
\begin{equation*}
  \mathbb{H}_{-j}(\mathcal{L})\coloneqq \mathbb{H}(\mathcal{L})\circ\matx{H}_j(\mathcal{L}_j)^{\circ-1},
\end{equation*}
with its analogue $\bar{\mathbb{H}}$, and $\bar{\mathbb{H}}_{-j}$. Due to Property 1, all these matrices are also positive semidefinite.

Finally, note that with the new notation we can simplify the definition of $Q$ to
\begin{equation}
  Q(\mathcal{S})\coloneqq\norm{\mathbb{T}(\mathcal{N})}_F^2-\norm{\bar{\mathbb{T}}(\mathcal{S})}_F^2.
\end{equation}

\subsubsection{Derivation}
Normalization is derived from the fact that $\bar{\matx{T}}_i(\varnothing)=\matx{T}_i(\mathcal{N})$. To prove monotonicity, let $\mathcal{S}\subseteq\mathcal{N}$ and $x\in\mathcal{N}\setminus\mathcal{S}$. Without loss of generality, assume $x\in\mathcal{N}_i$. We have
\begin{align*}
  &Q(\mathcal{S}\cup\{x\})=\norm{\mathbb{T}(\mathcal{N})}_F^2-\norm{\bar{\matx{T}}_i(\mathcal{S}_i\cup\{x\})\circ\bar{\mathbb{T}}_{-i}(\mathcal{S})}_F^2,\\
  &Q(\mathcal{S})=\norm{\mathbb{T}(\mathcal{N})}_F^2-\norm{\bar{\matx{T}}_i(\mathcal{S}_i)\circ\bar{\mathbb{T}}_{-i}(\mathcal{S})}_F^2.
\end{align*}
Monotonicity requires that $Q(\mathcal{S}) \leq Q(\mathcal{S}\cup\{x\})$, or
\begin{align*}
 -\norm{\bar{\matx{T}}_i(\mathcal{S}_i)\circ\bar{\mathbb{T}}_{-i}(\mathcal{S})}_F^2 &\leq-\norm{\bar{\matx{T}}_i(\mathcal{S}_i\cup\{x\})\circ\bar{\mathbb{T}}_{-i}(\mathcal{S})}_F^2.
\end{align*}
Using Property~\ref{prop:frob}, we have
\begin{equation*}
  \braket{\bar{\matx{T}}_i(\mathcal{S}_i),\bar{\mathbb{T}}_{-i}(\mathcal{S})}\geq\braket{\bar{\matx{T}}_i(\mathcal{S}_i\cup\{x\}),\bar{\mathbb{T}}_{-i}(\mathcal{S})}.
\end{equation*}
Expanding the unions using Lemma~\ref{lemma:union_T_bar}, and due to the linearity of the inner product this becomes
\begin{equation*}
  0\leq\braket{\matx{T}_i(\mathcal{S}_i\cup\{x\}),\bar{\mathbb{T}}_{-i}(\mathcal{S})},
\end{equation*}
which is always satisfied because the inner product between two positive semidefinite matrices is always greater or equal than zero.

To prove submodularity, let  $\mathcal{S}=\bigcup_{i=1}^R\mathcal{A}_i$, with $\mathcal{A}_i\subseteq\mathcal{N}_i$ for $i=1,\dots,R$ such that $\{\mathcal{A}_i\}_{i=1}^R$, forms a partition of $\mathcal{S}$. For $Q$ to be submodular we require that $\forall x,y\in\mathcal{N}\setminus\mathcal{S}$
\begin{equation}
  Q(\mathcal{S}\cup\{x\})-Q(\mathcal{S})\geq Q(\mathcal{S}\cup\{x,y\})-Q(\mathcal{S}\cup\{y\}).\label{eq:submodular_Q}
\end{equation}

As before, we have two different cases. Suppose  $x,y\in\mathcal{N}_i$, then \eqref{eq:submodular_Q} can be developed as
\begin{align*}
  \norm{\bar{\mathbb{T}}(\mathcal{A})}_F^2&-\norm{\bar{\matx{T}}_i(\mathcal{A}_i\cup\{x\})\circ\bar{\mathbb{T}}_{-i}(\mathcal{A})}_F^2\\
  &\geq\norm{\bar{\matx{T}}_i(\mathcal{A}_i\cup\{y\})\circ\bar{\mathbb{T}}_{-i}(\mathcal{A})}_F^2\\
  &\>-\norm{\bar{\matx{T}}_i(\mathcal{A}_i\cup\{x,y\})\circ\bar{\mathbb{T}}_{-i}(\mathcal{A})}_F^2.
\end{align*}
Rewriting this expression using Property~\ref{prop:frob}, we can express the left hand side as
\begin{equation*}
  \braket{\bar{\matx{H}}_i(\mathcal{A}_i),\bar{\mathbb{H}}_{-i}(\mathcal{A})}-\braket{\bar{\matx{H}}_i(\mathcal{A}_i\cup\{x\}),\bar{\mathbb{H}}_{-i}(\mathcal{A})},
\end{equation*}
and the right hand side as
\begin{equation*}
  \braket{\bar{\matx{H}}_i(\mathcal{A}_i\cup\{y\}),\bar{\mathbb{H}}_{-i}(\mathcal{A})}-\braket{\bar{\matx{H}}_i(\mathcal{A}_i\cup\{x,y\}),\bar{\mathbb{H}}_{-i}(\mathcal{A})}.
\end{equation*}
Leveraging the linearity of the inner product we arrive at
\begin{align*}
  &\langle\bar{\matx{H}}_i(\mathcal{A}_i)-\bar{\matx{H}}_i(\mathcal{A}_i\cup\{x\}),\bar{\mathbb{H}}_{-i}(\mathcal{A})\rangle\\
  &\,\,\,\geq\braket{\bar{\matx{H}}_i(\mathcal{A}_i\cup\{y\})-\bar{\matx{H}}_i(\mathcal{A}_i\cup\{x,y\}),\bar{\mathbb{H}}_{-i}(\mathcal{A})}.\label{eq:angle_bound_1}\numberthis
\end{align*}
Developing the matrices using Lemma~\ref{lemma:union_H}, we can operate on both sides of this expression giving, for the left hand side
  \begin{equation*}
    \braket{-\matx{H}_i(\{x\})+2\bar{\matx{T}}_i(\mathcal{A}_i)\circ \matx{T}_i(\{x\}),\bar{\mathbb{H}}_{-i}(\mathcal{A})},
  \end{equation*}
  and for the right hand side
  \begin{equation*}
  \braket{-\matx{H}_i(\{x\})+2\bar{\matx{T}}_i(\mathcal{A}_i\cup\{y\})\circ \matx{T}_i(\{x\}),\bar{\mathbb{H}}_{-i}(\mathcal{A})}.
  \end{equation*}
  Substituting in \eqref{eq:angle_bound_1}, we get
  \begin{align*}
    &\braket{\bar{\matx{T}}_i(\mathcal{A}_i)\circ \matx{T}_i(\{x\}),\bar{\mathbb{H}}_{-i}(\mathcal{A})}\\
    &\geq\braket{\bar{\matx{T}}_i(\mathcal{A}_i\cup\{y\})\circ \matx{T}_i(\{x\}),\bar{\mathbb{H}}_{-i}(\mathcal{A})},
  \end{align*}
and using Lemma~\ref{lemma:union_T_bar} we finally arrive at
  \begin{equation}
    \braket{\matx{T}_i(\{y\})\circ \matx{T}_i(\{x\}),\bar{\mathbb{H}}_{-i}(\mathcal{A})}\geq0,
  \end{equation}
  which is always satisfied because the inner product of positive semidefinite matrices is always non-negative.

Next, suppose  $x\in\mathcal{N}_i$ and $y\in\mathcal{N}_j$ with $i\neq j$, then \eqref{eq:submodular_Q} can be rewritten as
  \begin{align*}
    &\braket{\bar{\matx{H}}_i(\mathcal{A}_i)-\bar{\matx{H}}_i(\mathcal{A}_i\cup\{x\}),\bar{\mathbb{H}}_{-i}(\mathcal{A})}\\
    &\geq\braket{\bar{\matx{H}}_i(\mathcal{A}_i)-\bar{\matx{H}}_i(\mathcal{A}_i\cup\{x\}),\bar{\matx{H}}_j(\mathcal{A}_j\cup\{y\})\circ\bar{\mathbb{H}}_{-(i,j)}(\mathcal{A})}.
  \end{align*}
Using Lemma~\ref{lemma:union_H}, we can further develop this expression into
  \begin{align*}
    &\braket{-\matx{H}_i(\{x\})+2\bar{\matx{T}}_i(\mathcal{A}_i)\circ \matx{T}_i(\{x\}),\bar{\mathbb{H}}_{-i}(\mathcal{A})}\\
    &\geq\left\langle-\matx{H}_i(\{x\})+2\bar{\matx{T}}_i(\mathcal{A}_i)\circ \matx{T}_i(\{x\}),\right.\\
    &\qquad\left.\bar{\matx{H}}_j(\mathcal{A}_j\cup\{y\})\circ\bar{\mathbb{H}}_{-(i,j)}(\mathcal{A})\right\rangle.
  \end{align*}
Leveraging the linearity of the inner product this can be simplified as
  \begin{align*}
    &\left\langle-\matx{H}_i(\{x\})+2\bar{\matx{T}}_i(\mathcal{A}_i)\circ \matx{T}_i(\{x\}),\right.\\
    &\qquad\left.\bar{\mathbb{H}}_{-i}(\mathcal{A})-\bar{\matx{H}}_j(\mathcal{A}_j\cup\{y\})\circ\bar{\mathbb{H}}_{-(i,j)}(\mathcal{A})\right\rangle\geq0.\numberthis\label{eq:last_ineq}
  \end{align*}
Here, we can factorize the left entry of the inner product as
  \begin{align*}
    -\matx{H}_i(\{x\})&+2\bar{\matx{T}}_i(\mathcal{A}_i)\circ \matx{T}_i(\{x\})\\
    &=\matx{T}_i(\{x\})\circ\left[2\bar{\matx{T}}_i(\mathcal{A}_i)-\matx{T}_i(\{x\})\right]\\
    &=\matx{T}_i(\{x\})\circ\left[\bar{\matx{T}}_i(\mathcal{A}_i)+\bar{\matx{T}}_i(\mathcal{A}_i\cup\{x\})\right],\label{eq:last_posdef}\numberthis
  \end{align*}
  which is positive semidefinite due to Property 1, and the fact that the set of positive semidefinite matrices is closed under matrix addition.

  Similarly, the right entry of the inner product in \eqref{eq:last_ineq} can be factorized as
  \begin{align*}
    \bar{\mathbb{H}}_{-i}(\mathcal{A})&-\left(\bar{\matx{H}}_j(\mathcal{A}_j)+\matx{H}_j(\{y\})-2\bar{\matx{T}}_j(\mathcal{A}_j)\circ \matx{T}_j(\{y\})\right)\\
    &\quad\circ\bar{\mathbb{H}}_{-(i,j)}(\mathcal{A})\\
    &=\left(-\matx{H}_j(\{y\})+2\bar{\matx{T}}_j(\mathcal{A}_j)\circ \matx{T}_j(\{y\})\right)\circ\bar{\mathbb{H}}_{-(i,j)}(\mathcal{A}).
  \end{align*}
  The expression inside the parenthesis is analagous to that in \eqref{eq:last_posdef}. Hence, the resulting matrix is positive semidefinite, and thus \eqref{eq:last_ineq} is always satisfied, proving submodularity of $Q$ for all cases.

\bibliographystyle{IEEEtran}
\bibliography{IEEEabrv,main}

\vfill

\end{document}